\newtheorem{theorem}{Theorem}
\newtheorem{lemma}{Lemma}
\newtheorem{corollary}{Corollary}
\newtheorem*{remark}{Remark}
\renewcommand{\d}{\mathrm{d}}
\newcommand{\defn}{\stackrel{\triangle}{=} }
\newcommand{\bs}{\boldsymbol}
\renewcommand{\d}{\mathrm{d}}
\begin{document}
\title{Information-Energy Capacity Region for SLIPT Systems over Lognormal Fading Channels: A Theoretical and Learning-Based Analysis} 
\author{Nizar~Khalfet,~\IEEEmembership{Member,~IEEE,} Kapila~W.~S.~Palitharathna,~\IEEEmembership{Member,~IEEE,}
Symeon~Chatzinotas,~\IEEEmembership{Fellow,~IEEE,} Ioannis~Krikidis,~\IEEEmembership{Fellow,~IEEE}\vspace{-11mm}
\thanks{N. Khalfet, K. W. S. Palitharathna, and I. Krikidis are with the IRIDA Research Centre for Communication Technologies, Department of Electrical and Computer Engineering, University of Cyprus, 1678 Nicosia, Cyprus (e-mail: \{khalfet.nizar, palitharathna.kapila, krikidis\}@ucy.ac.cy).} 
\thanks{S. Chatzinotas is with the Interdisciplinary Centre for Security, Reliability and Trust (SnT), University of Luxembourg, L-1855 Luxembourg City, Luxembourg (email: symeon.chatzinotas@uni.lu)}\thanks{This work received funding from the European Union’s Horizon-JU-SNS research and innovation programme under the projects iSEE-6G (Grant agreement No. 101139291) and from the European Union’s Erasmus+ Programme under the CoRe 5G project (Grant Agreement No. 101260846).}}
\maketitle
\begin{abstract}
This paper presents a comprehensive analysis of the information-energy capacity region for simultaneous lightwave information and power transfer (SLIPT) systems over lognormal fading channels. Unlike conventional studies that primarily focus on additive white Gaussian noise channels, we study the complex impact of lognormal fading, which is prevalent in optical wireless communication systems such as underwater and atmospheric channels. By applying the Smith's framework to these channels, we demonstrate that the optimal input distribution is discrete, characterized by a finite number of mass points. We further investigate the properties of these mass points, especially at the transition points, to reveal critical insights into the rate-power trade-off inherent in SLIPT systems. Additionally, we introduce a novel cooperative information-energy capacity learning framework, leveraging generative adversarial networks, to effectively estimate and optimize the information-energy capacity region under practical constraints. Numerical results validate our theoretical findings, illustrating the significant influence of channel fading on system performance. The insights and methodologies presented in this work provide a solid foundation for the design and optimization of future SLIPT systems operating in challenging environments.
\end{abstract}

\begin{IEEEkeywords}
Simultaneous lightwave information and power transfer, optimal input distribution, information-energy capacity region.
\end{IEEEkeywords}

\vspace{-7mm}
\section{Introduction}
Optical wireless communication (OWC) has been recognized as a promising technology to achieve high-speed, low-latency, and highly secure communication. By exploiting the optical spectrum, OWC overcomes the spectrum limitations of radio frequency technology and is therefore suitable for a wide range of applications in future wireless systems, including 6G networks. OWC encompasses several technologies depending on the application environment, such as free-space optical communication, visible light communication, light-fidelity, underwater optical wireless communication, and non-terrestrial satellite communication~\cite{Chowdhury_2018}. In these systems, light-emitting diodes (LEDs) or laser diodes (LDs) are used as transmitters, while photodiodes (PDs) or photovoltaic (PV) cells serve as receivers. Moreover, existing lighting infrastructures can be easily adapted to simultaneously support illumination, information transfer, and lightwave power transfer~\cite{Chowdhury_2018}.

Recently, simultaneous lightwave information and power transfer (SLIPT) has emerged as a new communication paradigm that enables the concurrent transmission of information and energy using optical signals~\cite{Ding_2018, Vasilis_2023}. This paradigm is particularly relevant for energy-constrained devices and sensors, especially in scenarios where conventional wireless power transfer techniques are inefficient or infeasible, such as underwater, inter-satellite, vehicular, and indoor systems~\cite{Vasilis_2023}. Prior works have investigated various aspects of SLIPT systems, including energy maximization under quality-of-service constraints~\cite{Ding_2018}, rate and power optimization~\cite{Zhang_2019}, receiver architectures and policies~\cite{Geofeng_2019}, and rate-power trade-offs~\cite{Sahand_2021}. In addition, SLIPT has been applied to several practical scenarios, such as multi-cell beamforming systems~\cite{Abdelhady_2020}, resonant beam energy transfer~\cite{Liu_2021}, and underwater communication systems~\cite{Sait_2019}.

Despite these advances, most existing studies assume simplified channel models, typically additive white Gaussian noise (AWGN) channels with deterministic path loss. Consequently, the impact of channel fading on the information-energy trade-off, as well as on the structure of the optimal input distribution, remains largely unexplored. In practical OWC systems, fading effects are non-negligible and are commonly modeled using distributions such as lognormal, gamma-gamma, generalized gamma, and exponential gamma~\cite{Jamali_2018}. Among these, the lognormal fading model is particularly suitable for weak turbulence conditions in both underwater and atmospheric environments~\cite{Jamali_2018, Yang_2016}. Therefore, investigating SLIPT systems under lognormal fading is essential for realistic system design.

From an information-theoretic perspective, closed-form expressions for the exact capacity of OWC channels are generally unavailable, and existing works instead rely on upper and lower bounds~\cite{Chaaban_2017, Lapidoth_2009}. Furthermore, the information-energy capacity region of SLIPT systems has received limited attention, with only partial characterizations available in the literature~\cite{Nikita_2023}. In particular, the joint impact of lognormal channel fading, practical power constraints, and energy harvesting requirements on the optimal input distribution of SLIPT systems remains an important open problem.

OWC transmitters are subject to peak-power (PP) and average-power (AP) constraints due to device linearity limits and safety requirements~\cite{Chaaban_2017, Lapidoth_2009}. In SLIPT systems, an additional energy harvesting (EH) constraint must be satisfied to ensure sufficient power delivery at the receiver. While most existing works adopt linear EH models for analytical tractability~\cite{Zhang_2019}, practical EH circuits exhibit nonlinear behavior~\cite{Nikita_2023, Sait_2019}. This nonlinearity introduces additional complexity in characterizing the information-energy trade-off and significantly affects the structure of the capacity-achieving input distribution. To the best of the authors’ knowledge, a rigorous information-theoretic analysis of SLIPT systems that jointly incorporate PP, AP, and nonlinear EH constraints over lognormal fading channels remains lacking. Although~\cite{Chan_2005} establishes results for conditionally Gaussian channels with AP constraints, it is not directly applicable to our setting due to fundamental differences in the channel model and constraints. In particular, our work considers a lognormal fading channel and incorporates additional nonlinear EH constraints, leading to a more complex capacity characterization. Similarly, while~\cite{Duman_2018} studies discreteness of capacity-achieving distributions under symmetric amplitude constraints and bounded fading, our system employs intensity modulation with non-negative inputs and lognormal fading with unbounded support, resulting in different structural properties.

On the other hand, recent advances in data-driven methods have demonstrated the potential of deep learning (DL) in physical layer communications~\cite{Data}, with successful applications to signal detection, decoding, and channel estimation. However, their use in fundamental information-theoretic problems, such as channel capacity characterization, remains relatively underexplored~\cite{DataEstim}. The cooperative channel capacity learning (CORTICAL) framework introduced in~\cite{Letizia_2023} represents a notable step in this direction, leveraging a generative adversarial network (GAN)-based architecture to learn capacity-achieving input distributions. This framework employs a GAN-inspired cooperative approach where two neural networks operate in tandem: a generator that learns the capacity-achieving input distribution and a discriminator that distinguishes paired from unpaired channel samples. However, despite progress in neural network-based mutual information estimation, it remains unclear whether such approaches provide new insights into capacity regions for complex systems, such as SLIPT systems under lognormal fading with nonlinear EH constraints.

In this work, we consider a point-to-point SLIPT system over a lognormal fading channel under a non-coherent setting, where neither the transmitter nor the receiver has access to channel state information (CSI)~\cite{Faycal_2001}. The `no-CSI' assumption establishes a worst-case performance bound and provides a fundamental limit for systems operating without channel adaptation~\cite{Faycal_2001}. It is particularly relevant for practical low-complexity SLIPT receivers, such as those in Internet of Things and sensor applications, where the energy and computational overhead of channel estimation is prohibitive. These features distinguish our work from prior studies on SWIPT systems under Rayleigh fading with receiver-side CSI~\cite{Khalfet_2023}. In this framework, the system performance is characterized through the ergodic capacity of the averaged channel, where the transition probability is given by $p(y|x)=\mathbb{E}_{h}[p(y|x,h)]$. This non-coherent formulation differs fundamentally from CSI-based approaches~\cite{Duman_2018}, where mutual information is conditioned on the instantaneous channel realization.

This paper studies the information-energy capacity region of SLIPT systems where a single transmitter communicates with an information receiver while simultaneously delivering energy to an EH receiver over lognormal fading channels under practical system constraints. The system is subject to PP, AP, and nonlinear EH constraints, modeled through a function of the form $\mathbb{E}[bX \ln(1+CX)]$, which captures the characteristics of photovoltaic EH. Our objective is not to alter hardware characteristics such as LED linearity or PV cell efficiency, but to optimize the input distribution within these inherent constraints to characterize the achievable information-energy trade-off. Results demonstrate that adapting the input distribution to channel fading and EH nonlinearity significantly enhances system performance, highlighting the practicality and effectiveness of the proposed framework.

An explicit expression for the transition probability distribution of the lognormal fading channel is derived. By leveraging Smith's framework~\cite{Smith_1971} and employing a Hermite polynomial basis, we establish that the optimal input distribution is discrete with a finite number of mass points~\cite{Kapila_2024}. We further characterize how this distribution evolves across regimes, including high-SNR conditions and stringent PP constraints, and propose an achievable input distribution for the high-SNR regime. In particular, we identify critical transitions under SLIPT constraints, where binary inputs cease to be optimal, and mass points shift toward higher amplitudes due to nonlinear EH requirements. By incorporating a nonlinear EH constraint, we show how the feasible input space becomes non-convex, revealing a fundamental trade-off between information transfer and EH, and providing new insights into the structure of capacity-achieving distributions beyond conventional lognormal and Gaussian channel models.

In addition, we propose a Cooperative Information-Energy Capacity Learning (CIECL) framework based on GANs to optimize SLIPT systems under practical constraints. Unlike traditional methods such as the Blahut-Arimoto (BA) algorithm~\cite{blahut}, which rely on discretization and struggle with nonlinear, non-convex EH constraints, the proposed approach learns discrete, capacity-achieving input distributions directly in a flexible and data-driven manner. The generator network dynamically adapts to system constraints, including PP, AP, and nonlinear EH, without requiring predefined parametric forms or grid-based approximations and can achieve higher mutual information than BA algorithm under identical conditions. Compared to conventional approaches, our GAN-based framework (i) avoids the need for discretization, (ii) generalizes across different constraints and channel conditions, and (iii) efficiently captures the interplay between information transfer and EH constraints. Unlike the conventional GAN-based capacity learning framework in~\cite{Letizia_2023}, which focuses on generic point-to-point communication systems, our approach is uniquely designed for SLIPT systems, by introducing modifications to the value function to explicitly enforce PP, AP, and EH constraints, and extending the learning model to handle dual-receiver scenarios. We further modify the activation function in the generator’s output layer, applying a modified version of rectified linear unit (ReLU) that enforces the non-negativity and PP limits of the transmitted signal. Our results demonstrate that the proposed approach effectively captures the structure of optimal input distributions and provides improved performance compared to traditional methods.

\begin{table*}[!t]
\caption{Summary of notation.}
\centering
\begin{tabular}{|p{1.5cm}|p{5.5cm}||p{1.5cm}|p{5.5cm}|  }
\hline
\textbf{Notations} &\textbf{Description}&\textbf{Notations} &\textbf{Description}\\
 \hline
   $N$   &  Number of mass points & $X$ & Random variable of the input signal\\
    \hline
   $t$  & Time index  & $X^{\star}$  & Optimal input random variable \\
    \hline
       $A$  & PP constraint  & $H(X)$  & Entropy of a random variable $X$ \\
    \hline
  $\varepsilon$   &  AP constraint & $F^{(n)}$&   Sequence of the input distribution \\
     \hline
     $E_{\text{th}}$&  Energy required at the EH& $F(\cdot)$ &Probability distribution function of $X$\\
 \hline
 $h_{1}$ & Channel gain for the information link & $I(\cdot)$&   Mutual information as a function of the distribution $F$\\
 \hline
 $h_{1,l}$&  Path loss for the information link & $F^{\star}$&  Optimal input probability distribution\\
 \hline
  $h_{t}$ & Lognormal fading & $G$ & Generator\\
 \hline
$h_{2}$   &Channel gain for the EH link & $D$ & Discriminator \\
 \hline
 $F^{(n)}$ &  Sequence of a distribution $F$  &  $i(x;F)$& Mutual information density of $F$ evaluated at $x$\\
 \hline
 $F_N^{\star}$& Optimal  distribution with $N$ mass points  & $\mathcal{N}(0,\hspace{-0.5ex} \sigma_1^2)$  & Real Gaussian random variable with zero mean and $\sigma_1^2$ variance\\
 \hline
  $\mathrm{Supp}(F)$ & Support of a distribution $F$& $E_0$&  Point of increase of  $F^{\star}$ \\
 \hline
\end{tabular}
\vspace{-5mm}
\end{table*}
The key technical contributions of this paper are summarized as follows.
\begin{itemize}
    \item We characterize the information-energy capacity region of SLIPT systems over lognormal fading channels under PP, AP, and nonlinear EH constraints.
    
    \item We derive the transition probability distribution of the lognormal fading channel and establish, using Smith's framework~\cite{Smith_1971}, that the optimal input distribution is discrete with a finite number of mass points.
    
    \item We analyze the structural properties of the optimal input distribution across different operating regimes, including high-SNR conditions and stringent PP constraints, revealing transitions in the distribution behavior.
    
    \item We propose an achievable input distribution for the high-SNR regime and provide insights into the trade-off between information transmission and energy harvesting.
    
    \item We develop a data-driven CIECL framework based on GANs, which efficiently learns capacity-achieving input distributions under complex system constraints without requiring discretization.
\end{itemize}

\emph{Notation:} In this paper, sets are denoted with uppercase calligraphic letters. Random variables are denoted by uppercase letters, e.g., $X$. The realization and the set of the events from which the random variable $X$ takes values are  denoted by $x$ and $\mathcal{X}$, respectively. The operator $\mathbb{E}$ denotes the expectation with respect to the distribution of a random variable ${X}$. The notation $F$ denotes the probability distribution function of a random variable $X$, and $F^{\star}$ represents the optimal input distribution. The notation $\mathrm{Supp}(F)$ is the support of a distribution $F$, i.e, $\mathrm{Supp}(F)=\{x \in \mathcal{X} | \int_{x \in \mathcal{O}} \d F(x) >0 \hspace{1ex} \text{for every open neighborhood} \hspace{1ex} \mathcal{O} \hspace{1ex} \text{of} \hspace{1ex} x.\}$. $\log_2$ denotes the logarithm in the base $2$, while $\ln$ denotes the logarithm in the base $e$. $||\cdot||_1$ denotes the $L^1$ norm. The probability space is denoted by \( (\mathcal{X}, \mathcal{F}, P) \), where \( \mathcal{X} \) represents the sample space, \( \mathcal{F} \) is the sigma-algebra of measurable subsets of \( \mathcal{X} \), and \( P \) is the probability measure.
     For a random variable \( X \) with a probability measure \( P_X \), the expectation is defined as
    \begin{equation}
        \mathbb{E}[X] = \int_{\mathcal{X}} x dP_X(x).
    \end{equation}
    This integral notation signifies integration with respect to the probability measure \( P_X \).
     The mutual information between input \( X \) and output \( Y \) is given by
    \begin{equation}
        I(X; Y) = \int_{\mathcal{X} \times \mathcal{Y}} p(x, y) \log_2 \frac{p(x, y)}{p_X(x) p_Y(y)} dx dy.
    \end{equation}
    This is equivalent to 
    \begin{equation}
        I(X; Y) = \int_{\mathcal{X} \times \mathcal{Y}} \log_2 \frac{dP_{X, Y}}{d(P_X \times P_Y)} dP_{X,Y},
    \end{equation}
    where \( P_{X,Y} \) denotes the joint probability measure and \( P_X \times P_Y \) represents the product measure.
     The optimal input distribution \( F^* \) is a probability measure on the input space \( \mathcal{X} \), and expectations with respect to this measure are written as
    \begin{equation}
        \mathbb{E}_{F^*}[g(X)] = \int_{\mathcal{X}} g(x) dF^*(x).
    \end{equation}
Table I summarizes the key notation of the paper. The remainder of this paper is structured as follows. In Section \ref{System_model}, we present the system model, channel model, and the problem formulation. The discreteness of the optimal input distribution is proved in Section \ref{discreteness}. In Section \ref{sec:opt_input}, an achievable input distribution for high SNR regime is derived. The properties of the mass points and the analysis of the transition points where the binary input distribution is no longer optimal are presented in Section \ref{sec:mass_points}. The CIECL framework is presented in Section \ref{sec:CIECL}. Finally, numerical results are presented in Section \ref{Results} and Section \ref{conclusion} concludes the paper.

\vspace{-3mm}
\section{System Model}\label{System_model}
\begin{figure}[!t]
    \centering
    \includegraphics[width=0.99\columnwidth]{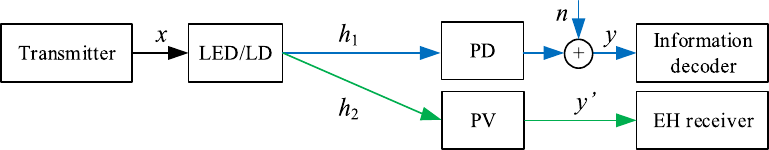}
    \caption{A SLIPT system over a lognormal-fading channel with an LED/LD transmitter, a PD-based information decoder, and a PV cell EH receiver.}
    \label{fig:f1}
    \vspace{-5mm}
\end{figure}

We consider a SLIPT system over a lognormal channel, wherein the transmitter is comprised of a single, narrow-beam LED/LD. A PD receiver is used for information reception and a PV cell receiver is used for lightwave EH, as detailed in previous works~\cite{Vasilis_2023, Geofeng_2019, Ding_2018}. The transmitter uses intensity modulation (IM), and hence, the instantaneous output optical power is proportional to the driving current signal. At the receiver, direct detection (DD) is used, and hence, a current is generated through the PD/PV cell that is proportional to the incident optical power. The generated current signals from the PD and the PV cell are separately sent for information decoding and EH. The harvested energy from the PV cell receiver can be used to charge a battery while the received signal from the PD is first sent through a trans-impedance amplifier to convert it to a voltage signal and then goes through the decoding stage~\cite{Zhang_2019}. A system diagram of this model is depicted in Fig. \ref{fig:f1}.

\vspace{-4mm}
\subsection{Channel Model}
The channel gain from the LED/LD to the PD can be modeled using $h_1 = h_{1,l}h_t$, where $h_{1,l}$ is the path loss, and $h_t$ is the lognormal distributed turbulence-induced fading. The path loss $h_{1,l}$ depends on the transmitter-receiver geometry, and can be expressed as~\cite{Anous_2018}
\vspace{-1mm}
    \begin{align}\label{equ:e5}
        \displaystyle h_{1,l} = \eta_t\eta_{r} e^{\frac{-c(\lambda)l}{\cos{\theta_{PD}}}}\frac{A_{PD}\cos{\theta_{PD}}}{2\pi l^2(1-\cos{\theta_0})},
    \end{align}
where $\eta_t$ and $\eta_r$ are the optical efficiencies of the transmitter and receiver, respectively, $c(\lambda)$ is the extinction coefficient of the channel, $\lambda$ is the optical wavelength, $l$ is the perpendicular distance between the transmitter plane and the receiver plane, $\theta_0$ is the transmitter beam divergence angle, $\theta_{PD}$ is the angle between the perpendicular axis to the receiver plane and the transmitter-PD trajectory, and $A_{PD}$ is the PD aperture area.

The lognormal distribution is used to model the fading coefficient $h_t$ in weak turbulence conditions in OWC (\textit{i.e.}, weak oceanic turbulence and weak atmospheric turbulence). Its probability density function (pdf) is given by~\cite{Jamali_2018, Jamali_2017}
\vspace{-2mm}
    \begin{align}\label{equ:logn}
        p_{h_t}(h_t) = \frac{1}{2h_t\sqrt{2\pi\sigma_{X_l}^2}}\exp\left(-\frac{(\ln {h_t}-2\mu_{X_l})^2}{8\sigma_{X_l}^2}\right).
    \end{align}
The fading coefficient in \eqref{equ:logn} takes the form $h_t=e^{2X_l}$, where $X_l$ is the fading log-amplitude, which is Gaussian distributed with mean $\mu_{X_l}$ and variance $\sigma_{X_l}^2$. The fading coefficient is normalized to ensure that the fading does not affect the average power \textit{i.e.}, $\mu_{X_l} = -\sigma^2_{X_l}$. 
We assume that the channel gain between the LED/LD and the PV is static, and hence, it can be modeled as $h_2 = h_{2,l}$, where $h_{2,l}$ is the geometric path loss from the LED/PD transmitter to the PV cell receiver. It can be obtained by replacing $A_{PD}$ and $\theta_{PV}$ in \eqref{equ:e5} with $A_{PV}$ and $\theta_{PV}$, respectively, where $A_{PV}$ is the effective area of the PV cell and $\theta_{PV}$ is the angle between the perpendicular axis to the receiver plane and the transmitter-PV cell trajectory. In this work, we adopt the no-CSI at either side framework introduced by Abou-Faycal et al.~\cite{Faycal_2001}. Specifically, neither the transmitter nor the receiver has access to instantaneous realizations of the lognormal fading channel. From a practical perspective, the no-CSI assumption captures both theoretical and operational considerations. Theoretically, it provides a fundamental performance bound that does not benefit from channel adaptation, and is widely recognized as a benchmark in noncoherent information theory~\cite{Faycal_2001}. Practically, it reflects the regime of ultra-low-power SLIPT receivers, where the energy and computational overhead required for accurate channel estimation, synchronization, and feedback is too costly. This work considers a point-to-point SLIPT system over a lognormal fading
channel, where neither the transmitter nor the receiver has access to instantaneous CSI [20]. Consequently, the transmitted signal design cannot adapt to specific channel realizations, and the receiver performs non-coherent detection. The channel is therefore modeled as a compound channel, with transition probability obtained by averaging over the fading distribution, i.e.,
    \begin{equation}
        p_{Y|X}(y|x) = \int p_{Y|X,H}(y|x,h)\, f_H(h)\, \mathrm{d}h,
    \end{equation}
where $f_H(h)$ denotes the lognormal fading distribution. The ergodic non-coherent capacity is then defined as
    \begin{equation}
        C = \sup_{F \in \Omega} I(X;Y),
    \end{equation}
where the supremum is taken over all input distributions $F$ that satisfy the PP, 
AP, and nonlinear EH constraints. This formulation is fundamentally different from the receiver-CSI framework adopted in~\cite{Duman_2018}, where the capacity is expressed as $\mathbb{E}_H[I(X;Y|H)]$ 
by conditioning on each channel realization. In contrast, our model strictly follows the no-CSI ergodic capacity framework, where neither the transmitter nor the receiver has access to the fading state \cite{Faycal_2001}, and capacity is defined through the averaged channel law. 

The following lemma establishes the relationship between the ergodic no-CSI capacity
considered in this work and the classical ergodic capacity with receiver CSI.

\noindent\begin{lemma}
Consider a fading channel with channel state $H$ independent of the channel input $X$. Let the averaged channel transition probability be defined as
        \begin{equation}
            p_{Y|X}(y|x)=\int p_{Y|X,H}(y|x,h)f_H(h)dh .
        \end{equation}
Let the ergodic no-CSI capacity be defined as
        \begin{equation}
            C = \sup_{F \in \Omega} I(X;Y),
        \end{equation}    
and the ergodic capacity with receiver CSI be defined as 
        \begin{equation}
            C_{\text{CSI}} = \sup_{F \in \Omega} \mathbb{E}_H[I(X;Y|H)] .
        \end{equation}
Then the following inequality holds
        \begin{equation}
            C \le C_{\text{CSI}}.
        \end{equation}
\end{lemma}
\begin{proof}
    Using the chain rule of mutual information and the independence of $X$ and $H$, we obtain $I(X;Y|H)=I(X;Y)+I(X;H|Y)\ge I(X;Y)$. Taking expectation over $H$ and maximizing over $F \in \Omega$ completes the proof.
\end{proof}
The quantity $C=\sup_{F\in\Omega}I(F)$ corresponds to the Shannon capacity of the averaged channel characterized by the transition probability $p_{Y|X}(y|x)=\mathbb{E}_H[p_{Y|X,H}(y|x,H)]$. Under the assumption of ergodic fading and independent channel realizations across channel uses, the classical channel coding theorem applies. Consequently, for any transmission rate $R<C$, there exists a sequence of codes whose decoding error probability approaches zero as the blocklength increases. In the absence of CSI at both the transmitter and the receiver, codebooks are generated according to the optimal input distribution $F^\star$. The receiver performs maximum-likelihood decoding with respect to the averaged channel transition probability $p_{Y|X}(y|x)$. Since the capacity-achieving distribution is discrete with a finite number of mass points, the resulting signaling scheme corresponds to a finite-alphabet amplitude modulation.

We note that the ergodic capacity considered in this work characterizes the average achievable performance over multiple independent channel realizations. While this metric provides fundamental insights into the information-energy trade-offs of SLIPT systems, particularly in scenarios where channel state information is limited or unavailable (e.g., underwater systems with constrained feedback links), outage-based capacity metrics are more suitable for ensuring user-specific reliability when CSI is available. Therefore, extending the proposed analysis to outage capacity constitutes an important direction for future work.

\vspace{-4mm}
\subsection{Information Transfer}
We consider a memoryless discrete-time lognormal channel. Let $x\in \mathbb{R}^+$ be the transmit signal. For a given $A>0$, and $\varepsilon>0$, the PP and AP constraints are $0\leq x\leq A$, and $\mathbb{E}\{x\}\leq\varepsilon$, respectively~\cite{Lapidoth_2009}. $\mathbb{E}\{\cdot\}$ denotes the expectation operator. {In OWC systems, LEDs and laser diodes operate linearly only within a limited input range; exceeding this range can cause nonlinear distortion and device degradation. To ensure reliable operation, we impose a PP constraint based on practical transmitter limitations and optical conversion efficiency, using values consistent with prior studies. Additionally, the AP constraint reflects thermal and energy efficiency considerations and aligns with safety standards for human exposure to optical radiation. The chosen AP and PP values ensure our simulation framework remains both realistic and implementable~\cite{Chaaban_2022}.} The received electrical signal at the information receiver is expressed as
    \begin{align}\label{equ:e1}
        y = a R_Ph_1x+n,
    \end{align}
where $a$ is the electrical-to-optical conversion efficiency at the LED/LD, $R_P$ is the responsivity of the PD, and $n$ is the real AWGN with zero mean and variance $\sigma_g^2$ \textit{i.e.}, $n\sim\mathcal{N}(0,\sigma_g^2)$. 

\vspace{-4mm}
\subsection{Power Transfer}
To perform EH, the PV receiver generates a current from the incident optical power and stores charges in a battery. The received current signal at the PV receiver is expressed as
    \begin{align}\label{equ:e2}
        y' = a R_Eh_2x,
    \end{align}
where $R_E$ is the responsivity of the PV cell. The effect of the noise can be neglected in the PV cell receivers~\cite{Geofeng_2019}. The harvested energy at the EH receiver for a time duration $T$ is expressed as~\cite{Ding_2018,Sait_2019}
    \begin{align}\label{equ:e3}
        E_H = f_ETI_{s}V_{oc},
    \end{align} 
where $f_E$ is the fill factor of the PV cell, $I_{s}$ is the shunt current, and $V_{oc}$ is the open circuit voltage. Within our mathematical framework, $I_{s}$ is expressed as $a R_Eh_2x$ \textit{i.e.}, the received current signal at the PV receiver, and $V_{oc}$ is  $v_t\ln\left(1+\frac{a R_E h_2 x}{I_0}\right)$, where $v_t$ is the thermal voltage, and $I_0$ is the dark saturation current~\cite{Ding_2018}. By combining these expressions and taking the expectation, the average harvest energy can be expressed as
    \begin{align}\label{equ:e4}
        \bar{E}_H = \mathbb{E}\left\{f_Ev_tT a R_Eh_2x\ln\left(1+\frac{a R_E h_2 x}{I_0}\right)\right\}.
    \end{align}
To satisfy the EH requirement at the receiver, $\bar{E}_H$ needs to be greater than the given threshold value $E_{th}$, \textit{i.e.}, $E_{th}\le \bar{E}_H$.

To capture the nonlinear behavior of the PV EH circuit, the EH model in \eqref{equ:e4} follows the widely adopted PV-based formulation (e.g., \cite{Ding_2018}), which characterizes the relationship between the photocurrent, open-circuit voltage, and harvested power. In practice, PV cells are typically operated near their optimal operating point using circuit-level techniques such as maximum power point tracking (MPPT). Importantly, MPPT relies only on local voltage and current measurements within the harvesting circuit and does not require CSI. Therefore, the adopted EH model is consistent with the no-CSI assumption and remains compatible with low-complexity receiver architectures in SLIPT systems. Investigating more detailed EH models that do not rely on MPPT and capture additional circuit-level nonlinearities (e.g., \cite{Nikita_2023}) is an interesting direction for future work.

\subsection{Problem Formulation}
We form an optimization problem to maximize the average mutual information between the channel input $X$ and the channel output $Y$ subject to PP and AP constraints at the transmitter, and a minimum harvested energy constraint at the EH receiver. The optimization problem, P1 is expressed as
	\begin{maxi!}|s|
		{F\in \mathcal{F}_A}{I(F)=\int_{0}^{A} \int_{y} p(y|x)\log_2\frac{p(y|x)}{p(y;F)}\d y \d F(x)} 
		{\label{Eqopt}}{} 
        \addConstraint{\mathbb{E}\{X\} \leq \varepsilon}  
        \addConstraint{\mathbb{E}\left\{bX\ln\left(1+cX\right)\right\}\ge E_{th}},
	\end{maxi!}
where $b = f_Ev_tTa R_Eh_{2}$ and $c = a R_Eh_{2}/I_0$ are constants, $\mathcal{F}_A$ is the set of all input distributions that satisfy the PP constraint \textit{i.e.}, $\mathcal{F}_A = \left\{F\in\mathcal{F}, \int_{0}^{A}\d F(x) = 1\right\}$, and $\mathcal{F}$ is the set of all possible input distributions. The mutual information between the random variables $X$ and $Y$ of input and output, respectively, can be written as a function of the input distribution $F$  
\begin{equation}
    I(F)\defn\int_{0}^{A}   i(x;F)   \d F(x),
\end{equation}
where $i(x;F)=\int_y p(y|x)\log_2\frac{p(y|x)}{p(y;F)}\d y$ is the marginal information density. Denote by $g_j:F \rightarrow \mathbb{R}$ with $j\in\{1,2\}$ the following functions
\begin{IEEEeqnarray}{cCl}
         g_1(F)&\defn& \int_{0}^{A} x \d F(x)-\varepsilon,\\
         g_2(F)&\defn& E_{\text{th}}- \int_{0}^{A} bx\ln(1+cx) \d F(x), 
\end{IEEEeqnarray}
and let $\Omega$ be the set of all input distributions, such that
\begin{equation}
   \hspace{-1.7mm} \Omega = \left\{F\in\mathcal{F};\int_{0}^{A}dF(x) = 1; g_j(F)\leq 0; j\in \{1,2\}\right\}.
\end{equation}
Hence, the optimization problem in \eqref{Eqopt} could be written in the compact form as $\displaystyle C = \sup_{F\in \Omega} I(F)$.

The optimization problem in \eqref{Eqopt}, while non-convex in the signal amplitude $x$, is a convex problem in the space of input distributions $F_X(x)$, since the objective is concave and the constraints are linear in $F_X(x)$. This convexity justifies the use of the KKT conditions for characterizing the properties of the optimal distribution.

\vspace{-2mm}
\section{Discreteness of the Optimal Input Distribution}\label{discreteness}
The discreteness of the optimal input distribution for maximizing mutual information in IM/DD channels has been well established in prior works, including \cite{Lapidoth_2009}, which analyzed the capacity of IM/DD optical channels under PP and/or AP constraints. These studies demonstrated that, for the AWGN channel, the capacity-achieving input distribution consists of a finite number of mass points. However, our analysis extends these results by considering a more general setting that includes lognormal fading, which introduces additional  variations in the received signal that were not accounted for in previous AWGN-based formulations. Furthermore, unlike prior studies that primarily focused on power constraints, our optimization problem incorporates a nonlinear EH constraint, represented by \( \mathbb{E}[bX \ln(1 + cX)] \geq E_{\text{th}} \). {The introduction of the nonlinear EH constraint $\mathbb{E}[bX \ln(1+CX)]$ fundamentally alters the structure of the optimization problem. Unlike linear constraints, this nonlinear condition directly impacts the feasible set of admissible input distributions. As a result, the capacity-achieving distribution no longer exhibits the same structure observed in systems with only PP or AP constraints. Our numerical results (see Figs. 5 and 6) reveal that the number and location of the distribution’s mass points change as a function of the EH threshold, confirming that the nonlinear EH constraint induces new tradeoff regimes.} Additionally, our work considers a practical SLIPT system where a PD is used for information reception, while a PV cell is used for EH, introducing new trade-offs between maximizing mutual information and ensuring minimum harvested energy. Our results establish the discreteness property of the optimal input distribution under these additional constraints and fading effects, providing new insights into the design of IM/DD-based SLIPT systems.

We study the properties of the capacity-achieving distribution of the SLIPT system, and the solution of the optimization
problem in \eqref{Eqopt}. To this end, the mathematical framework
proposed in \cite{Smith_1971} is extended. Firstly, Theorem \ref{Theorem 1} establishes the existence and the uniqueness of the optimal input distribution. Secondly, by using the Lagrangian Theorem, the dual equivalent problem is given by Corollary \ref{Corollary 1}. Thirdly, we provide necessary and sufficient conditions for the optimal input distribution in Theorem \ref{Theorem 2} and extend to a more useful set in Corollary \ref{Corollary 2}. Finally, we show that the capacity-achieving input distribution is discrete in Theorem \ref{Theorem 3}.

\begin{theorem}\label{Theorem 1}
The capacity $C$ is achieved by a unique input distribution $F^*$ \textit{i.e.},
\begin{equation}\label{equ:t1}
    C = \sup_{F\in \Omega} I(F) = I(F^*).
\end{equation}
\end{theorem}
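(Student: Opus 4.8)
The plan is to follow Smith's classical argument for the existence and uniqueness of a capacity-achieving input distribution, adapted to the present setting with a nonlinear energy-harvesting constraint. The skeleton has three ingredients: (i) the feasible set $\Omega$ is convex and compact in an appropriate topology; (ii) the objective $I(F)$ is weak-$*$ continuous and concave on $\Omega$; and (iii) strict concavity on the relevant subset gives uniqueness. First I would equip the space of input distributions on the compact interval $[0,A]$ with the topology of weak convergence; by Helly's selection theorem (or Prokhorov's theorem, since $[0,A]$ is compact) the set $\mathcal{F}_A$ of all probability measures on $[0,A]$ is compact. Then I would check that $\Omega$ is a closed subset: the maps $F\mapsto g_1(F)=\int_0^A x\,\d F(x)-\varepsilon$ and $F\mapsto g_2(F)=E_{\text{th}}-\int_0^A bx\ln(1+cx)\,\d F(x)$ are both continuous under weak convergence because $x$ and $bx\ln(1+cx)$ are bounded continuous functions on $[0,A]$; hence $\{g_1\le 0\}$ and $\{g_2\le 0\}$ are closed, and $\Omega$ is a closed subset of a compact set, therefore compact. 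Convexity of $\Omega$ is immediate since $g_1$ and $g_2$ are affine (linear) in $F$. I would also note $\Omega$ is nonempty, which requires a mild feasibility assumption: there exists at least one $F\in\mathcal{F}_A$ with $\mathbb{E}\{X\}\le\varepsilon$ and $\mathbb{E}\{bX\ln(1+cX)\}\ge E_{\text{th}}$ — e.g. a point mass placed suitably, assuming $E_{\text{th}}\le bA\ln(1+cA)$ and the constraints are jointly compatible.

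Next I would establish that $I(F)$ is continuous on $\mathcal{F}_A$ in the weak-$*$ topology. This is the step that actually uses the structure of the lognormal channel: the output density $p(y;F)=\int_0^A p(y|x)\,\d F(x)$ is obtained by averaging the conditional density $p(y|x)$ — which here is itself a mixture over the lognormal fading of Gaussian kernels — and one must verify that $i(x;F)$ is a bounded continuous function of $x$ and that $F\mapsto I(F)$ inherits continuity. The key technical facts are that the differential entropy $h(Y)$ is finite (the output has a finite second moment because the input is bounded and the noise is Gaussian, and the fading moments of the lognormal are finite), and that $p(y|x)$ depends continuously on $x$ in $L^1(\d y)$; these give weak-$*$ continuity of $I$ via dominated convergence, exactly as in \cite{Smith_1971} and \cite{Lapidoth_2009}. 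Concavity of $I(F)$ in $F$ is standard: $I(F)$ equals $h(Y)-h(Y|X)$ where $h(Y|X)$ is linear in $F$ and $h(Y)$ is concave in $p(y;F)$, which is linear in $F$. A continuous concave function on a nonempty compact convex set attains its supremum, which yields the existence of an optimal $F^\star$ and $C=I(F^\star)$.

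For uniqueness I would argue that $I(F)$ is \emph{strictly} concave on $\Omega$. Suppose $F_1,F_2\in\Omega$ both achieve $C$; then by concavity every convex combination $F_\lambda=\lambda F_1+(1-\lambda)F_2$ also lies in $\Omega$ and also achieves $C$, so $h(Y)$ is affine along this segment. But $h(Y)$ is strictly concave as a functional of the output density $p_Y$ unless $p_{Y;F_1}=p_{Y;F_2}$ almost everywhere; and since the channel $x\mapsto p(y|x)$ is injective in the sense that distinct input distributions on $[0,A]$ induce distinct output densities (the Gaussian-kernel-mixed-over-lognormal family is "identifiable" — one can recover $F$ from $p(y;F)$ via the analyticity/completeness of the kernel, as in the AWGN argument), we conclude $F_1=F_2$. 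The main obstacle I anticipate is precisely this identifiability step together with the weak-$*$ continuity of $I$: the conditional density is no longer a clean Gaussian but a lognormal average of Gaussians, so one must confirm that $p(y|x)$ is still strictly positive, smooth, and that the induced output densities form a family from which $F$ is uniquely determined — this is where the explicit Hermite-polynomial representation of the transition density announced earlier in the paper becomes essential, since it exhibits $p(y|x)$ as an analytic function whose moments/derivatives in $x$ determine $F$. Everything else (compactness, convexity, concavity, attainment) is routine once that analytic structure is in hand.
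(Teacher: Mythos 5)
Your proposal follows essentially the same route as the paper's proof, which is itself only a sketch invoking the three ingredients you identify: weak-$*$ compactness of $\Omega$, weak continuity and (strict) concavity of $I(F)$, and the extreme value theorem for existence plus strict concavity for uniqueness. You are in fact more careful than the paper on the two points it leaves implicit --- nonemptiness of $\Omega$ and the identifiability of $F$ from $p(y;F)$ needed to upgrade concavity of $h(Y)$ to strict concavity of $I$ on distributions --- so the proposal is correct and, if anything, more complete than the printed argument.
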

\begin{proof}
The proof is presented in Appendix \ref{ProofOfTheorem1}.
\end{proof}


\begin{corollary}\label{Corollary 1}
 Strong duality holds for the optimization problem in~\eqref{Eqopt} i.e., there are constants, $\lambda_j\geq 0$, for $j\in\{1,2\}$ such that
\begin{align}
    C = \sup_{F\in \Omega} I(F) - \sum_{j=1}^{2} \lambda_j g_j(F).
\end{align}
\end{corollary}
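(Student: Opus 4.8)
The plan is to derive this from the Lagrange duality theorem for concave programming in the locally convex space of finite signed Borel measures on $[0,A]$ endowed with the weak-$*$ topology, so that the bulk of the work reduces to verifying hypotheses, most of which are already available from the proof of Theorem~\ref{Theorem 1}. First I would recall that $\mathcal{F}_A$ is convex and weak-$*$ compact (tightness is automatic on the compact interval $[0,A]$, so this is Helly/Prokhorov), that $F\mapsto I(F)$ is concave and weak-$*$ continuous on $\mathcal{F}_A$, and that $C=\sup_{F\in\Omega}I(F)$ is finite. The two constraint functionals $g_1(F)=\int_0^A x\,\d F(x)-\varepsilon$ and $g_2(F)=E_{\text{th}}-\int_0^A bx\ln(1+cx)\,\d F(x)$ are affine in $F$ and weak-$*$ continuous, since $x$ and $bx\ln(1+cx)$ are bounded continuous functions on $[0,A]$; moreover they map $\mathcal{F}_A$ into $\mathbb{R}^2$, whose nonnegative orthant (the ``positive cone'' defining the inequality constraints) has nonempty interior.

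Second, I would verify a Slater-type strict feasibility condition: there exists $F_0\in\mathcal{F}_A$ with $g_1(F_0)<0$ and $g_2(F_0)<0$ simultaneously, i.e. an input distribution that strictly respects the average-power budget while strictly exceeding the energy-harvesting threshold. This is exactly the non-degenerate regime of interest, and it ties back to the standing assumption that $E_{\text{th}}$ lies strictly below the maximal energy harvestable under a strict average-power margin; I would make this assumption explicit here.

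Third, with concavity, weak-$*$ continuity, and strict feasibility in hand, the Lagrange duality theorem (in the form used in \cite{Smith_1971}, or equivalently Luenberger's vector-space optimization) furnishes multipliers $\lambda_1,\lambda_2\ge 0$ with
\begin{equation}
C \;=\; \sup_{F\in\mathcal{F}_A}\Big[\, I(F)-\sum_{j=1}^{2}\lambda_j\, g_j(F)\,\Big],
\end{equation}
together with complementary slackness $\lambda_j g_j(F^*)=0$ for $j\in\{1,2\}$. Concretely this is a supporting-hyperplane argument applied to the convex set $\{(r,s_1,s_2) : \exists F\in\mathcal{F}_A,\ r\le I(F),\ s_j\ge g_j(F)\}$ at its boundary point $(C,0,0)$: Slater's condition forces the $r$-coefficient of the separating hyperplane to be strictly positive, hence normalizable to one, and the orthant geometry pins the remaining coefficients to the sign that defines $\lambda_1,\lambda_2\ge 0$. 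Finally, since every $F\in\Omega$ satisfies $g_j(F)\le 0$ while $F^*\in\Omega$ attains $I(F^*)=C$ with $\sum_{j}\lambda_j g_j(F^*)=0$, the supremum above is unchanged when restricted from $\mathcal{F}_A$ to $\Omega$, which yields the stated form $C=\sup_{F\in\Omega} I(F)-\sum_{j=1}^{2}\lambda_j g_j(F)$.

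The main obstacle I anticipate is the infinite-dimensional character of the feasible set: strong duality can fail for convex programs over general topological vector spaces in the absence of an interiority hypothesis, so the argument genuinely hinges on (i) the constraint map having finite-dimensional range with the associated cone possessing nonempty interior, and (ii) the verification of Slater's condition, which is where the implicit assumption on $E_{\text{th}}$ enters. The remaining ingredients---concavity of $I$, weak-$*$ compactness of $\mathcal{F}_A$, and continuity of all three functionals---are either standard or already discharged in the proof of Theorem~\ref{Theorem 1}.
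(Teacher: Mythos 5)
Your proposal follows essentially the same route as the paper's proof: both establish strong duality by setting up the Lagrangian for the concave program and verifying a Slater-type strict feasibility condition (an input distribution with $g_1<0$ and $g_2<0$), the paper deferring the duality machinery to Smith's framework where you spell out the Luenberger/supporting-hyperplane argument with the finite-dimensional constraint range. The only substantive differences are in your favor: you explicitly flag that strict feasibility requires an assumption on $E_{\text{th}}$ (strictly harvestable under a strict average-power margin), which the paper merely asserts ``such $x_0$ exists'' without justification, and you carry out the final restriction of the supremum from $\mathcal{F}_A$ to $\Omega$ via complementary slackness, a step the paper leaves implicit.
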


\begin{proof}
The proof is presented in Appendix \ref{ProofOfCorollary1}.
\end{proof}

\begin{theorem}\label{Theorem 2}
$F^*$ is the capacity-achieving input distribution, if and only if, $\forall F\in \Omega$ there exist $\lambda_j>0$, for $j\in\{1,2\}$ such that
\begin{align}
    \int_0^A [ i(x;F^*) &- \lambda_1x + \lambda_2 bx\ln(1+cx) ]\d F(x)\nonumber \\
    &\leq C- \lambda_1\varepsilon +\lambda_2 E_{th}.
\end{align}
\end{theorem}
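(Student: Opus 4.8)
The plan is to read the capacity problem $C=\sup_{F\in\Omega}I(F)$ in \eqref{equ:t1} as the maximization of a concave functional over a convex set and to characterize its solution through the first-order (variational) optimality condition, expressed with the multipliers supplied by Corollary~\ref{Corollary 1}. First I would record the structural facts: $\mathcal{F}_A$ is convex, the functionals $g_1,g_2$ are \emph{affine} in $F$ (each is an integral of a fixed function of $x$ against $\d F$), and $F\mapsto I(F)$ is concave for the fixed transition law. Hence, taking $\lambda_1,\lambda_2$ as in Corollary~\ref{Corollary 1}, the Lagrangian $L(F)\defn I(F)-\lambda_1 g_1(F)-\lambda_2 g_2(F)$ is concave on $\mathcal{F}_A$; strong duality together with feasibility of $F^*$ (Theorem~\ref{Theorem 1}) then yields complementary slackness, $\lambda_j g_j(F^*)=0$ for $j\in\{1,2\}$, and the fact that $F^*$ maximizes $L$ over $\mathcal{F}_A$.

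Next I would compute the one-sided directional (weak) derivative of $I$ at $F^*$. For $F\in\mathcal{F}_A$ and $\theta\in[0,1]$, put $F_\theta=(1-\theta)F^*+\theta F$; since the output law is affine in the input, $p(y;F_\theta)=(1-\theta)p(y;F^*)+\theta\,p(y;F)$, and differentiating the output-entropy term under the integral sign gives
\[\lim_{\theta\downarrow 0}\frac{I(F_\theta)-I(F^*)}{\theta}=\int_0^A i(x;F^*)\,\d F(x)-I(F^*),\]
while affinity of $g_1,g_2$ gives $\lim_{\theta\downarrow 0}\theta^{-1}\big(g_j(F_\theta)-g_j(F^*)\big)=g_j(F)-g_j(F^*)$. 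I expect this interchange of differentiation and integration --- performed over the \emph{unbounded} lognormal output alphabet --- to be the main obstacle: it requires a uniform-in-$\theta$ domination of $\log p(y;F_\theta)$ and rests on the regularity of the transition density obtained from the Hermite-polynomial representation (the same regularity used toward Theorem~\ref{Theorem 3}) together with compactness of $[0,A]$, which makes $i(\cdot;F^*)$ bounded and continuous.

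With these derivatives in hand, I would invoke the elementary principle that a concave $L$ on the convex set $\mathcal{F}_A$ is maximized at $F^*$ if and only if its one-sided directional derivative at $F^*$ is nonpositive in every admissible direction, i.e., for all $F\in\mathcal{F}_A$,
\[\int_0^A i(x;F^*)\,\d F(x)-I(F^*)-\lambda_1\big(g_1(F)-g_1(F^*)\big)-\lambda_2\big(g_2(F)-g_2(F^*)\big)\le 0,\]
where the ``only if'' uses that $\theta\mapsto L(F_\theta)$ is maximized at $\theta=0$ and the ``if'' uses that a concave function lies below its tangent. Substituting $I(F^*)=C$, the complementary-slackness identities $\lambda_j g_j(F^*)=0$, and the explicit forms $g_1(F)=\int_0^A x\,\d F(x)-\varepsilon$ and $g_2(F)=E_{th}-\int_0^A bx\ln(1+cx)\,\d F(x)$, then grouping the $\d F$-integrands on the left and the constants on the right collapses this to
\[\int_0^A\big[i(x;F^*)-\lambda_1 x+\lambda_2 bx\ln(1+cx)\big]\d F(x)\le C-\lambda_1\varepsilon+\lambda_2 E_{th},\]
which is the stated condition; since $\Omega\subseteq\mathcal{F}_A$, establishing it for all $F\in\mathcal{F}_A$ establishes it in particular for all $F\in\Omega$.

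For the converse, I would run the chain backwards. Assuming the displayed inequality holds for every $F\in\Omega$ with some $\lambda_1,\lambda_2>0$, the feasibility bounds $\int_0^A x\,\d F(x)\le\varepsilon$ and $\int_0^A bx\ln(1+cx)\,\d F(x)\ge E_{th}$ give $\int_0^A i(x;F^*)\,\d F(x)\le C$ for all $F\in\Omega$; the concavity tangent bound $I(F)\le\int_0^A i(x;F^*)\,\d F(x)$ then yields $I(F)\le C$ for every feasible $F$. Evaluating this at the unique optimizer furnished by Theorem~\ref{Theorem 1} forces equality throughout, and the strict concavity of $I$ underlying that uniqueness identifies $F^*$ with the optimizer, so $I(F^*)=C$ and $F^*$ is capacity-achieving. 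This closes the equivalence.
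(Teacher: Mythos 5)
Your argument is essentially the paper's own proof: both proceed via the Smith-style Lagrangian $\mathcal{L}(F)=I(F)-\lambda_1 g_1(F)-\lambda_2 g_2(F)$ with the multipliers supplied by Corollary~\ref{Corollary 1}, compute the weak (directional) derivative $\int_0^A i(x;F^*)\,\d F(x)-I(F^*)$ of the concave functional, and use non-positivity of this derivative over the convex feasible set as the necessary and sufficient optimality condition, which after complementary slackness collapses to the stated inequality. You simply spell out details the paper delegates to the reference to Smith --- the justification of differentiating under the integral, the complementary-slackness identities, and the converse via the tangent bound $I(F)\le\int_0^A i(x;F^*)\,\d F(x)$ together with the uniqueness/strict-concavity argument of Theorem~\ref{Theorem 1} --- so the proposal is correct and follows the paper's route.
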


\begin{proof}
The proof is presented in Appendix \ref{ProofOfTheorem2}.
\end{proof}

\begin{corollary}\label{Corollary 2}
Let $\text{Supp}(F^*)$ be the points of support of a distribution $F^*$. Then, $F^*$ is the optimal input distribution, if there exist $\lambda_1\geq 0$, and $\lambda_2\geq 0$, such that
\begin{align}\label{equ:fequal}
    \lambda_1(x&-\varepsilon)-\lambda_2(bx\ln(1+cx)-E_{th})+C
    \nonumber \\
    &-\int_y p(y|x)\log_2\frac{p(y|x)}{p(y;F^*)}\d y\geq 0,
    \end{align}
for all $x$, with equality if $x\in \text{Supp}(F^*)$.
\end{corollary}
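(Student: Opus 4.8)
The plan is to pass through the Lagrangian reformulation of the problem established in Corollary~\ref{Corollary 1}, from which both the pointwise inequality and the equality on $\mathrm{Supp}(F^*)$ emerge naturally, and then to observe that the pointwise inequality is already enough to trigger Theorem~\ref{Theorem 2}. Throughout I write $s(x)$ for the left-hand side of the displayed inequality, so the claim is $s(x)\ge 0$ for all $x\in[0,A]$ with $s(x)=0$ on the support of $F^*$.

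For the \emph{sufficiency} direction — which is what the corollary asserts — suppose $\lambda_1,\lambda_2\ge 0$ are such that $s(x)\ge 0$ for every $x\in[0,A]$. Rearranging $s(x)\ge 0$ gives
\[
  i(x;F^*) - \lambda_1 x + \lambda_2\, bx\ln(1+cx) \;\le\; C - \lambda_1\varepsilon + \lambda_2 E_{th}, \qquad x\in[0,A].
\]
Integrating both sides against an arbitrary $F\in\Omega$ leaves the right-hand side unchanged and turns the left-hand side into $\int_0^A\!\big[i(x;F^*)-\lambda_1 x+\lambda_2 bx\ln(1+cx)\big]\,\d F(x)$, which is precisely the inequality appearing in Theorem~\ref{Theorem 2}. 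Hence $F^*$ attains $C$, and by Theorem~\ref{Theorem 1} it is \emph{the} (unique) capacity-achieving input distribution.

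To see how the pointwise form arises — and thus that such multipliers exist whenever $F^*$ is optimal — I would start from Corollary~\ref{Corollary 1}, which gives $C=\sup_{F\in\mathcal{F}_A} L(F)$ with $L(F)\defn I(F)-\sum_{j}\lambda_j g_j(F)$; $L$ is concave since $I$ is concave and $g_1,g_2$ are affine. Because $F^*$ maximizes $L$ over the convex set $\mathcal{F}_A$, its weak (Gateaux) derivative must be nonpositive in every feasible direction, and a standard computation — differentiating $I\big((1-\theta)F^*+\theta F\big)$ in $\theta$ at $0^+$ and using linearity of $g_1,g_2$ — yields, for all $F\in\mathcal{F}_A$,
\[
  L'_{F^*}(F)=\int_0^A i(x;F^*)\,\d F(x)-I(F^*)-\sum_{j}\lambda_j\big(g_j(F)-g_j(F^*)\big)\le 0 .
\]
Specializing to the point mass $\delta_t$ at $t\in[0,A]$ — admissible here precisely because the dual supremum ranges over all of $\mathcal{F}_A$ rather than only $\Omega$ — and inserting $g_1(\delta_t)=t-\varepsilon$, $g_2(\delta_t)=E_{th}-bt\ln(1+ct)$ together with the complementary-slackness identities $\lambda_j g_j(F^*)=0$ and $I(F^*)=C$ (both consequences of strong duality in Corollary~\ref{Corollary 1}), the inequality $L'_{F^*}(\delta_t)\le 0$ collapses to exactly $s(t)\ge 0$. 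Taking instead $F=F^*$ gives $L'_{F^*}(F^*)=0$, i.e. $\int_0^A s(x)\,\d F^*(x)=0$; since $s$ is continuous and nonnegative, this forces $s\equiv 0$ on $\mathrm{Supp}(F^*)$.

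The step I expect to be the main obstacle is the analytic one behind the derivative identity: justifying $\frac{\d}{\d\theta}I\big((1-\theta)F^*+\theta F\big)\big|_{\theta=0^+}=\int_0^A i(x;F^*)\,\d F(x)-I(F^*)$ requires dominated convergence, hence a dominating function for the family $i\big(x;(1-\theta)F^*+\theta F\big)$, $\theta\in[0,\theta_0]$; this leans on the boundedness of the input set $[0,A]$ and on the regularity of the lognormal-plus-Gaussian output density $p(y;F)$ established earlier in the paper. The remaining pieces — extracting complementary slackness and the value $I(F^*)=C$ from Corollary~\ref{Corollary 1}, and deducing $s\equiv 0$ on the support from $\int_0^A s\,\d F^*=0$ — are routine once that identity is in hand.
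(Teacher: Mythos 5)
Your proposal is correct and follows essentially the same route as the paper: the sufficiency direction is proved identically, by rearranging the pointwise condition into $i(x;F^*)-\lambda_1 x+\lambda_2 bx\ln(1+cx)\le C-\lambda_1\varepsilon+\lambda_2 E_{th}$ and integrating against an arbitrary $F\in\Omega$ to recover the integral condition of Theorem~2, and the origin of the pointwise form is likewise traced back to testing with point masses $\delta_t$ (the paper phrases this as a contradiction with a step-function distribution concentrated at $x_0$, you phrase it via the Gateaux derivative $L'_{F^*}(\delta_t)\le 0$ together with complementary slackness and $\int_0^A s\,\d F^*=0$). The only difference is organizational --- you fold the variational machinery that the paper places in the proof of Theorem~2 into this corollary, and your derivation of the equality on $\mathrm{Supp}(F^*)$ from nonnegativity and a vanishing integral is arguably cleaner than the paper's contradiction argument --- but the underlying Smith-type argument is the same, including the shared (and inherited) looseness about $C$ versus $I(F^*)$ on the right-hand side.
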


\begin{proof}
The proof is presented in Appendix \ref{ProofOfCorollary2}.
\end{proof}
In the following, we will show that the equality in Corollary \ref{Corollary 2} can not be satisfied in a set that has an accumulation point, hence the support of $F^{*}$ must be discrete. The discreteness property of the optimal input distribution is given by the following theorem,
\begin{theorem}\label{Theorem 3}
    The optimal input distribution that achieves the capacity in \eqref{Eqopt} is discrete with a finite number of mass points.
\end{theorem}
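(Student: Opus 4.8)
The plan is to follow the classical argument of Smith~\cite{Smith_1971}, adapted to the lognormal transition law and to the additional energy-harvesting multiplier, and to close it by a contradiction based on the identity theorem for analytic functions. First I would reduce the statement: since every $F\in\Omega$ is supported on the compact interval $[0,A]$, the set $\mathrm{Supp}(F^{*})$ is closed and bounded, hence by Bolzano--Weierstrass it is either finite or possesses an accumulation point $x_{0}\in[0,A]$; it therefore suffices to exclude the latter case. Using the equality part of Corollary~\ref{Corollary 2}, define
\begin{equation}
\Psi(x)\defn i(x;F^{*})-\lambda_{1}(x-\varepsilon)+\lambda_{2}\big(bx\ln(1+cx)-E_{th}\big)-C ,
\end{equation}
so that $\Psi(x)=0$ for every $x\in\mathrm{Supp}(F^{*})$, while $\Psi(x)\geq 0$ for all $x\in[0,A]$.

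Next I would show that $\Psi$ extends to a function $\Psi(z)$ analytic on the slit domain $D\defn\mathbb{C}\setminus(-\infty,-1/c]$, which is simply connected and contains both $[0,A]$ and the whole half-line $[0,\infty)$. The term $\lambda_{1}(z-\varepsilon)$ is entire, and with the principal branch of the logarithm $z\mapsto bz\ln(1+cz)$ is analytic on $D$, since its only singularity $z=-1/c$ has been removed. For the information density I would invoke the Hermite-polynomial series representation of the lognormal transition density $p(y\mid x)$ established in Section~\ref{discreteness}, which exhibits $p(y\mid x)$ as an entire function of $x$ whose growth in $y$ is controlled by Gaussian-type tails refined through the lognormal mixing measure. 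Writing
\begin{equation}
i(x;F^{*})=\int_{y}p(y\mid x)\log_{2}p(y\mid x)\,\d y-\int_{y}p(y\mid x)\log_{2}p(y;F^{*})\,\d y ,
\end{equation}
and noting that $p(y;F^{*})>0$ everywhere (it is a mixture of strictly positive densities), I would verify via dominated convergence and Morera's theorem that each integral converges uniformly on compact subsets of $D$, so that $i(\cdot;F^{*})$, and hence $\Psi$, is analytic on $D$.

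With analyticity in hand, suppose for contradiction that $\mathrm{Supp}(F^{*})$ had an accumulation point $x_{0}\in[0,A]\subset D$. Then $\Psi$ vanishes on a subset of $D$ with an accumulation point inside $D$, so by the identity theorem $\Psi\equiv 0$ on $D$, which forces
\begin{equation}
i(x;F^{*})=\lambda_{1}(x-\varepsilon)-\lambda_{2}\big(bx\ln(1+cx)-E_{th}\big)+C
\end{equation}
for every real $x\geq 0$. This contradicts the asymptotics of the two sides. On the one hand, $i(x;F^{*})=D\!\big(p(\cdot\mid x)\,\|\,p(\cdot;F^{*})\big)\geq 0$ for all $x$. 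On the other hand, if $\lambda_{2}>0$ the right-hand side is dominated by $-\lambda_{2}bx\ln(1+cx)$ and tends to $-\infty$ as $x\to\infty$, which is impossible for a nonnegative quantity; if $\lambda_{2}=0$ the right-hand side is affine (or constant) in $x$, whereas $i(x;F^{*})$ diverges to $+\infty$ as $x\to\infty$ at a rate dictated by the lognormal tail of $p(y;F^{*})$ — strictly sublinear for the lognormal channel and super-linear in the Gaussian special case — in neither case matching an affine function. Hence $\mathrm{Supp}(F^{*})$ has no accumulation point, and being closed and bounded it is finite, which establishes the theorem.

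The main obstacle is the analytic-extension step. Unlike the pure AWGN setting of~\cite{Smith_1971,Lapidoth_2009}, where $h(Y\mid X=x)$ does not depend on $x$ and only a single integral with a fixed logarithmic weight needs to be continued, the lognormal fading makes $p(y\mid x)$ a genuine mixture whose tails are heavier than Gaussian, so both integrals above must be continued and their uniform convergence on complex disks must be justified with care; this is exactly where the Hermite-basis expansion of the transition density is used, to obtain both the required analyticity and the tail/growth estimates. The same tail control is what supplies the asymptotic behaviour of $i(x;F^{*})$ needed to complete the contradiction.
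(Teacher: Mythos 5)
Your skeleton coincides with the paper's: both proofs start from the equality/inequality conditions of Corollary~\ref{Corollary 2}, continue the left-hand side of \eqref{equ:fequal} analytically in $x$, and invoke the identity theorem to show that an accumulation point in $\mathrm{Supp}(F^{*})\subset[0,A]$ would force the optimality identity to hold on an entire complex domain; your Bolzano--Weierstrass reduction to ``no accumulation point $\Rightarrow$ finite'' is the same (and slightly more explicit) than the paper's. Where you genuinely diverge is in how the contradiction is closed. The paper expands $p(y|z)$, $\log_2 p(y;F^{*})$ and $\log_2 p(y|z)$ in the Hermite basis $H_{\mathrm{e}_m}$, uses orthogonality to turn the identity $s(z)\equiv 0$ into a power series in $z$, matches coefficients to pin down $c_m$, and then argues that the implied output law $p(y;F^{*})=e^{\ln(2)\sum_m c_m H_{\mathrm{e}_m}(y)}$ cannot be a valid density. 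You instead compare the asymptotics of the two sides of the identity as $x\to\infty$: the nonnegativity of the relative entropy kills the case $\lambda_2>0$, and a growth-rate mismatch is supposed to kill $\lambda_2=0$. Your route is more transparent when it works, but it shifts all the difficulty into two claims you do not establish.

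The first gap is the analytic-continuation step itself. You say you ``would verify via dominated convergence and Morera's theorem'' that $i(\cdot;F^{*})$ is analytic on $\mathbb{C}\setminus(-\infty,-1/c]$, but for the lognormal law $p(y|x)$ is itself an integral over the fading, and controlling $\int p(y|z)\log_2 p(y;F^{*})\,\d y$ for complex $z$ is exactly the non-trivial part; the paper's derivation of \eqref{equ:conditional_f} and the Hermite orthogonality identities \eqref{equ:p_yF} and \eqref{equ:pyz} are the tools that make this continuation concrete, and your proposal has no substitute for them. The second and more serious gap is the $\lambda_2=0$ case: the assertion that $i(x;F^{*})$ diverges ``strictly sublinearly'' for the lognormal channel is stated without proof, yet it carries the entire weight of the contradiction there (note also that for the lognormal channel $h(Y|X=x)$ depends on $x$, unlike in the AWGN setting, so both terms of $i(x;F^{*})$ must be estimated). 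Finally, a sign slip: by Corollary~\ref{Corollary 2} your $\Psi$ satisfies $\Psi(x)\le 0$ on $[0,A]$ with equality on the support, not $\Psi(x)\ge 0$; this does not affect the identity-theorem step but should be fixed. As written, the proposal is a valid alternative strategy with the decisive analytic and asymptotic estimates left unproven.
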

\begin{proof}
    The proof is presented in Appendix \ref{ProofOfTheorem3}.
\end{proof}

After deriving the discrete properties of the optimal input distribution, it is natural to extend the analysis to explore the behavior of the input distribution under specific channel conditions. In particular, the high SNR regime presents an interesting case where the impact of noise becomes negligible, allowing us to derive an analytical expression for an achievable input distribution. This regime is important for characterizing system performance under favorable conditions and for understanding how AP and EH constraints shape the information energy capacity region of SLIPT systems. Therefore, in the following section, we present a detailed analysis of the optimal input distribution for the lognormal channel in the high SNR regime.

\vspace{-2mm}
\section{Optimal Input Distribution for Lognormal Channel in High SNR Regime}
\label{sec:opt_input}
In this section, we derive the optimal input distribution for the lognormal fading channel under the high SNR regime, considering the PP, AP, and EH constraints. Unlike the AWGN channel, the lognormal fading introduces more complexity, which we address by directly working with the original form of the EH constraint. By following the discussion in Section \ref{discreteness}, the optimal input distribution is discrete. For simplicity, we consider a discrete distribution for the input $x$ over the equally-spaced alphabet $x\in\{0,l,2l,\ldots,(N-1)l\}$, where $0\le lk \le A, \forall k\in \{0,\ldots, N-1\}$, $l = \frac{A}{N-1}$, and $N$ is the number of mass points. The probability mass function (PMF) of $0\le p_x(k)\le 1$ is assigned to each point such that
\vspace{-2mm}
    \begin{subequations}
    \label{eqConstr}
        \begin{IEEEeqnarray}{rcl}
            \sum_{k=0}^{N-1}p_x(k) &=& 1,\\
            \sum_{k=0}^{N-1}lkp_x(k) &=& \varepsilon,\\
            \sum_{k=0}^{N}\left(blk\ln\left(1+clk\right)\right)p_x(k)&\ge& E_{th}.
      \end{IEEEeqnarray}
    \end{subequations}
Hence, $p_x(k)$ satisfies the non-negativity, PP, AP, and EH constraints.

The objective is to maximize the mutual information \( I(F) \) between the input \( X \) and output \( Y \), given by
    \begin{align}
    I(F) \approx \sum_{k=0}^{N-1} p_x(k) \log_2 \left( \frac{p(y|x_k)}{p(y; F)} \right),
    \end{align}
where \( p_x(k) \) represents the PMF at the discrete points \( l_k = kl \) for \( k \in \{0, 1, \ldots, N-1\} \), and \( l = \frac{A}{N-1} \).
The Lagrangian function incorporating the constraints in \eqref{eqConstr} is given by 
    \begin{align}\label{eqLagr}
    &\mathcal{L}(p_x(k), \lambda_1, \lambda_2, \lambda_3) \hspace{-0.5mm}= \hspace{-0.5mm}I(F)-\hspace{-0.5ex} \lambda_1 \hspace{-0.5mm}\left(\sum_{k=0}^{N-1} l_k p_x(k) - \varepsilon \right)\hspace{-0.5mm}-\\
    &\lambda_2 \left(E_{th} \hspace{-0.5ex}- \hspace{-0.5ex}\sum_{k=0}^{N-1} b l_k \ln(1 + c l_k) p_x(k)\right)
    \hspace{-0.5ex}- \hspace{-0.5ex}\lambda_3 \hspace{-0.5mm}\left(\sum_{k=0}^{N-1} p_x(k) \hspace{-0.5mm}-\hspace{-0.5mm} 1\right)\hspace{-0.5mm},\nonumber
    \end{align}
where \( \lambda_1 \), \( \lambda_2 \), and \( \lambda_3 \) are the Lagrange multipliers for the AP, EH, and normalization constraints, respectively.
Taking the derivative of the Lagrangian in \eqref{eqLagr} with respect to \( p_x(k) \) and setting it to zero gives the optimality condition
    \begin{align}
    \frac{\partial \mathcal{L}}{\partial p_x(k)} \hspace{-0.5ex}= \hspace{-0.5ex}\log_2 \left( \frac{p(y|x_k)}{p(y; F)} \right)\hspace{-0.5ex} -\hspace{-0.5ex} \lambda_1 l_k\hspace{-0.5ex} - \hspace{-0.5ex}\lambda_2 b l_k\ln(1 + c l_k) \hspace{-0.5ex}-\hspace{-0.5ex} \lambda_3 = 0.
    \end{align}
Rearranging this, we find
    \begin{align}
    p_x(k) = \frac{1}{Z} \exp \left( -\lambda_1 l_k - \lambda_2 b l_k \ln(1 + c l_k) \right),
    \end{align}
where \( Z \) is the normalization constant ensuring that \( \sum_{k=0}^{N-1} p_x(k) = 1 \), and is given by
    \begin{align}
    Z = \sum_{k=0}^{N-1} \exp \left( -\lambda_1 l_k - \lambda_2 b l_k \ln(1 + c l_k) \right).
    \end{align}
The optimal input distribution in the high SNR regime balances the contributions from the AP and EH constraints. The derived expression

\begin{align}
p_x(k) = \frac{1}{Z} \exp \left( -\lambda_1 l_k - \lambda_2 b l_k \ln(1 + c l_k) \right),
\end{align}shows that the distribution is influenced by both the linear term \( l_k \) and the logarithmic term \( \ln(1 + c l_k) \), reflecting the dual impact of the power and energy-harvesting requirements.
This analytical expression provides a practical approach to determining the achievable input distribution in high SNR scenarios, where the noise is negligible, and the system operates near its capacity limits. After characterizing  the achievable input distribution for the high SNR regime, we will focus now on the particular scenario where the binary input distribution is no longer optimal. This is a critical point in SLIPT systems, since the binary distribution maximizes both information and energy transfer simultaneously and therefore there is not a trade-off between them. In Section V, we analyze the properties of the mass points under these conditions and characterize a sufficient condition on the PP constraint $A$, where the binary distribution is no longer optimal.

\section{Properties of the mass points}\label{sec:mass_points}

We provide an analysis of the properties of the optimal input distribution \( F^{\star} \) under both the PP and AP constraints. It is well-established that the optimal input distribution is discrete, and its cumulative distribution function (cdf) is fully described by the vectors \(\bs{q}\) and \(\bs{x}\), along with the number of mass points \( N \). The positions of the mass points are defined by the vector
\[
\bs{x} = (x_1, \ldots, x_N),
\]
while the corresponding probabilities are given by
\[
\bs{q} = (q_1, \ldots, q_N).
\]
For simplicity, we assume the mass points are ordered such that \( x_1 < x_2 < \ldots < x_N < A \)~\cite{Khalfet_2023}. Furthermore, the optimal input distribution \( F^{\star}_{N} \) is uniquely determined by the triplet \((\bs{q}^{\star}, \bs{x}^{\star}, N)\). The optimality conditions for \( F^{\star}_{N} \), parameterized by non-negative constants \(\alpha_1 \geq 0\) and \(\alpha_2 \geq 0\), are expressed as
\[
i(x; F^{\star}_N) \leq C + \alpha_1 (x - \varepsilon) - \alpha_2 (\mathcal{E}(x) - E_{\mathrm{th}}), \quad \forall x \in [0, A],
\]
with equality holding at the mass points:
\[
i(x_i^{\star}; F^{\star}_N) = C + \alpha_1 (x_i^{\star} - \varepsilon) - \alpha_2 (\mathcal{E}(x_i^{\star}) - E_{\mathrm{th}}).
\]
Here, the function \(\mathcal{E}(x)\) is defined as
\[
\mathcal{E}(x) = b x \ln(1 + c x).
\]

\begin{remark}
The input distribution has a necessary mass point at zero; by contradiction, we assume that  $0<x_1^{\star}$, then we can define a vector for a new mass point localisation 
as follows
\begin{equation}
    \hat{\bs{x}^{\star}}=[\hat{x}^{\star}_1,\ldots, \hat{x}^{\star}_n],
\end{equation}
with $\hat{x}^{\star}_i=(x_i^{\star}-x_1^{\star}$). Denote by $F^{'}_{n^{\star},A, \epsilon}$ the new distribution characterized by $(\bs{q}^{\star},\bs{x}^{\star},n^{\star})$. By using the fact that the differential entropy is unchanged by the translations, we have
\begin{equation}
    H(Y;F^{'}_{n^{\star},A, \varepsilon})= H(Y;F_{n^{\star},A, \varepsilon}),
\end{equation}
which is a contradiction, since the capacity achieving distribution is unique. Therefore, $x^{\star}_1=0$.
\end{remark}

In this section, we analyze the transition points of the optimal input distribution by deriving the behavior of the mutual information \( I(X; Y) \) with respect to the amplitude \( A \) of the input signal~\cite{Shamai_2010}. The mutual information between the input \( X \) and the output \( Y \) for the lognormal fading channel can be expressed as
    \begin{equation}
    I(X; Y) = H(Y) - H(Y | X),
    \end{equation}
where \( H(Y) \) is the entropy of the output \( Y \), and \( H(Y | X) \) is the conditional entropy given \( X \). Given that the noise \( V \) is Gaussian, the conditional entropy \( H(Y | X) \) can be expressed as
    \begin{equation}
    H(Y | X) = \frac{1}{2} \log_2(2\pi e \sigma_g^2).
    \end{equation}
The output entropy \( H(Y) \) is defined as
    \begin{equation}
    H(Y) = -\int_{-\infty}^{\infty} p_Y(y) \log_2 p_Y(y) \, dy,
    \end{equation}
where \( p_Y(y) \) is the marginal distribution of \( Y \), derived from the conditional distribution \( p_{Y|X}(y|x) \) as
    \begin{equation}
    p_Y(y) = \int_{0}^{A} p_{Y|X}(y|x) p_X(x) \, dx.
    \end{equation}
To understand the behavior of the mutual information with respect to the amplitude \( A \), we differentiate \( I(X; Y) \) with respect to \( A \)
    \begin{equation}
    \frac{\partial I(X; Y)}{\partial A} = \frac{\partial H(Y)}{\partial A} - \frac{\partial H(Y | X)}{\partial A}.
\end{equation}
Since \( H(Y | X) \) does not depend on \( A \), we have
    \begin{equation}
    \frac{\partial H(Y | X)}{\partial A} = 0.
    \end{equation}
Thus, the derivative simplifies to
    \begin{equation}
    \frac{\partial I(X; Y)}{\partial A} = \frac{\partial H(Y)}{\partial A}.
    \end{equation}
The entropy \( H(Y) \) can be written as
    \begin{align}
    H(Y) &= -\int_{-\infty}^{\infty} \left( \int_{0}^{A} p_{Y|X}(y|x) p_X(x) \, dx \right) \nonumber \\
    &\times\log_2 \left( \int_{0}^{A} p_{Y|X}(y|x) p_X(x) \, dx \right) dy.
    \end{align}
To differentiate this with respect to \( A \), we first consider the inner integral
    \begin{equation}
    \frac{\partial}{\partial A} \left( \int_{0}^{A} p_{Y|X}(y|x) p_X(x) \, dx \right) = p_{Y|X}(y|A) p_X(A).
    \end{equation}
Thus, the derivative of \( H(Y) \) with respect to \( A \) becomes
    \begin{IEEEeqnarray}{l}
    \frac{\partial H(Y)}{\partial A} = -\int_{-\infty}^{\infty}  p_{Y|X}(y|A) p_X(A)\\
    \times \log_2 \left( \int_{0}^{A} p_{Y|X}(y|x) p_X(x)  dx \right) + p_{Y|X}(y|A) p_X(A)  dy.\nonumber
    \end{IEEEeqnarray}
We now consider the specific input distribution \( F_{3,A,\varepsilon} \), which has three mass points, described as
    \begin{equation}
    F_{3,A,\varepsilon}(x) = 
    \begin{cases} 
    1 - \frac{\varepsilon}{A} - \frac{x_1}{A}q, & x = 0 \quad \text{(i.e., } q_0\text{)} \\
    q, & x = x_1 \quad \text{(i.e., } q_1\text{)} \\
    \frac{\varepsilon + x_1 q}{A}, & x = A \quad \text{(i.e., } q_2\text{)}
    \end{cases}.
    \end{equation}
Substituting the expression for \( q_2 = \frac{\varepsilon + x_1 q}{A} \) into the derivative, we get
    \begin{equation}
        \frac{\partial I(F_{3,A,\varepsilon})}{\partial A} \hspace{-0.5ex}= \hspace{-0.5ex}\frac{\varepsilon + x_1 q}{A}\hspace{-0.8ex} \int_{-\infty}^{\infty} \hspace{-0.5ex}\frac{\partial p_{Y|X}(y|A)}{\partial A} \log_2 \hspace{-0.5ex} \left( \frac{p_{Y|X}(y|A)}{p_Y(y)} \right) \hspace{-0.5ex}dy.
    \end{equation}

The transition probability \( p_{Y|X}(y|x) \) for the lognormal channel is given by
    \begin{align}
        p_{Y|X}(y|x) &=
        \frac{e^{-y^2/2\sigma_g^2}}{4\pi\sigma_g\sigma_{X_l}}
        \sum_{n=0}^{\infty} \frac{1}{n!} H_n\left(\frac{y}{\sigma_g}\right) \left(\frac{aR_P h_{1,l} x}{\sigma_g}\right)^n \nonumber \\
        &\times\exp\left(\frac{\mu_{X_l} + 2n\sigma_{X_l}^2}{4\sigma_{X_l}^2}\right).
    \end{align}
By taking the derivative of \( p_{Y|X}(y|A) \) with respect to \( A \), we have
    \begin{align}
        \frac{\partial p_{Y|X}(y|A)}{\partial A} &= \frac{e^{-y^2/2\sigma_g^2}}{4\pi\sigma_g\sigma_{X_l}} \sum_{n=0}^{\infty} \frac{n}{n!} H_n\left(\frac{y}{\sigma_g}\right) \left(\frac{aR_P h_{1,l}}{\sigma_g}\right)^n \nonumber \\ 
        &\times A^{n-1} \exp\left(\frac{\mu_{X_l} + 2n\sigma_{X_l}^2}{4\sigma_{X_l}^2}\right).
    \end{align}

\subsection{Final Expression for the Derivative}

Substituting the derivative into the expression for \( \frac{\partial I(F_{3,A,\varepsilon})}{\partial A} \), we obtain
    \begin{align}
        &\frac{\partial I(F_{3,A,\varepsilon})}{\partial A} \hspace{-0.5ex}= \hspace{-0.5ex}\frac{\varepsilon + x_1 q}{A} \hspace{-0.5ex} \int_{-\infty}^{\infty} \frac{e^{-y^2/2\sigma_g^2}}{4\pi\sigma_g\sigma_{X_l}} \sum_{n=0}^{\infty} \frac{n}{n!} H_n \left(\frac{y}{\sigma_g}\right) A^{n-1} \nonumber \\
        &\times \hspace{-0.5ex} \left(\frac{aRP h_{1,l}}{\sigma_g}\right)^n \hspace{-0.5ex}\exp\left(\frac{\mu_{X_l} + 2n\sigma_{X_l}^2}{4\sigma_{X_l}^2}\right) \hspace{-0.5ex}\log_2 \left( \frac{p_{Y|X}(y|A)}{p_Y(y)} \right) dy.
    \end{align}
This expression represents the rate of change of mutual information with respect to the amplitude \( A \), considering the specific distribution \( F_3 \). Hence, by solving $ \frac{\partial I(F_{3,A,\varepsilon})}{\partial A} >0$, we obtain a  sufficient condition for the transition point ${A}$, where the binary distribution is no longer optimal.
Following the detailed examination of the lognormal fading channel and the associated transition points (in Section \ref{sec:mass_points}), we now shift our focus to the evaluation of the CIECL framework. Section \ref{sec:CIECL} presents the development and application of this framework, leveraging deep learning techniques to optimize input distributions and estimate the capacity region under varying system conditions.

\vspace{-3mm}
\section{Cooperative Information-Energy Capacity Learning}\label{sec:CIECL}
    \begin{figure}[!t]
        \centering
        \includegraphics[width=0.85\columnwidth]{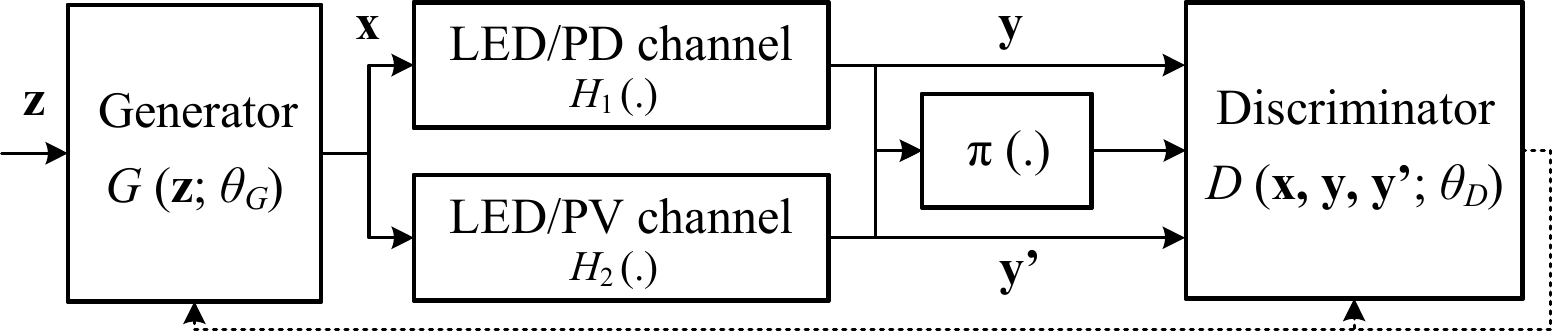}
        \caption{Proposed information-energy capacity learning framework inspired by GAN.}
        \label{fig:GAN}
        \vspace{-5mm}
    \end{figure}
        \begin{figure*}[!t]
        \centering
        \includegraphics[width=0.99\textwidth]{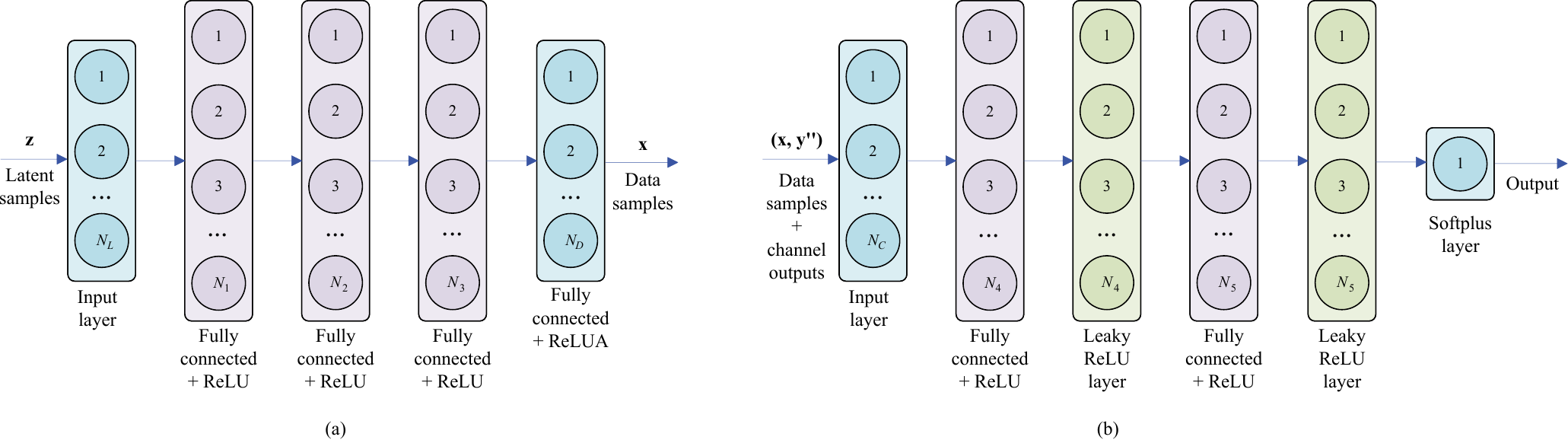}
        \caption{Neural networks used in CIECL. (a) Generator. (b) Discriminator.}
        \label{fig:ANN}
        \vspace{-5mm}
    \end{figure*}
In this section, we present the CIECL approach to learn the optimum input distribution and the information-capacity region of the SLIPT system. To this end, we modify the CORTICAL framework presented in\cite{Letizia_2023} to account for the SLIPT system with lognormal channel, and the energy harvesting constraint. According to this framework, a GAN is adopted~\cite{Antonia_2018}. In the following, we explain the operation principle of GAN.

In the GAN framework, two neural networks named, generator ($G$) and discriminator ($D$) are used as shown in Fig. \ref{fig:GAN}. In the adversarial training process of $G$, the probability of $D$ making a mistake is maximized. Denote $\mathbf{x}\sim p_{\text{data}}(\mathbf{x})$ are the data samples and $\mathbf{z}\sim p_{Z}(\mathbf{z})$ are the latent samples. The Nash equilibrium is reached when the minimization over $G$ and the maximization over $D$ of the value function given by
    \begin{align}
        \mathcal{V}(G,D) &= \mathbb{E}_{\mathbf{x}\sim p_{\text{data}}(\mathbf{x})} \Big[\log(D(\mathbf{x}))\Big] \nonumber \\
        &+ \mathbb{E}_{\mathbf{z}\sim p_{Z}(\mathbf{z})} \Big[\log(1-D(G(\mathbf{z})))\Big],
    \end{align}
is obtained so that $G(\mathbf{z})\sim p_{\text{data}}(\mathbf{x})$, where $\mathbb{E}_{\mathbf{x}\sim p_{\text{data}}(\mathbf{x})}$ is the expectation over $p_{\text{data}}(\mathbf{x})$, and $\mathbb{E}_{\mathbf{z}\sim p_{Z}(\mathbf{z})}$ is the expectation over $p_{Z}(\mathbf{z})$, respectively. The information-energy capacity region estimation problem requires $G$ to learn the input distribution maximizing the mutual information. Hence, $G$ and $D$ need to play a cooperative max-max game with the objective for $G$ to produce channel input samples and the objective of $D$ is to distinguish paired and unpaired channel input-output samples. The discriminator uses the output samples from the generator $\mathbf{x}$, and channels' outputs samples $\mathbf{y}$ and $\mathbf{y'}$ to distinguish paired and unpaired samples. 

The proposed cooperative framework that learns both generator and discriminator is shown in Fig. \ref{fig:GAN}, where $H_1$ and $H_2$ represent the channels between the LED transmitter and the PD and PV, respectively, and $\pi(\cdot)$ is a permutation function. {The generator network \( G(z) \) generates input samples that approximate the optimal input distribution, while the discriminator network \( D(x, y, y') \) estimates mutual information by distinguishing between true input-output samples and randomly paired samples. The overall objective is to refine \( G(z) \) such that the generated inputs maximize mutual information while satisfies the AP and EH constraints. To formalize this, we define the loss function as
   \begin{align}
        &J_\alpha(G, D) = \alpha \mathbb{E}_{z \sim p_Z} \left[ \log D(G(z), H_1(G(z)), H_2(G(z))) \right] \nonumber \\
        &\quad - \mathbb{E}_{z \sim p_Z} \left[ \log D(G(z), \pi(H_1(G(z))), \pi(H_2(G(z)))) \right],
    \end{align}
\noindent where \( H_1(G(z)) \) and \( H_2(G(z)) \) represent the received signals at the PD and PV cell, respectively. \( \pi(\cdot) \) is a permutation function that shuffles input-output pairs to create non-optimal sample distributions for training the discriminator. In addition to mutual information maximization, we introduce penalty terms to enforce SLIPT constraints
     \begin{align}
        &J_{\text{total}}(G, D) = J_\alpha(G, D) - \lambda_A \max (||G(z)||_1 - A, 0) \nonumber \\
        &- \lambda_P \max (\mathbb{E}[||G(z)||_1] - \varepsilon, 0) \nonumber \\
        &- \lambda_E \max (E_{\text{th}} - \mathbb{E}[bG(z) \ln(1 + cG(z))], 0),
    \end{align}
\noindent where \( \lambda_A, \lambda_P, \lambda_E \) are non-negative Lagrange multipliers enforcing the PP, AP, and EH constraints, respectively. The term \( \max (||G(z)||_1 - A, 0) \) ensures that the generated inputs satisfy the peak power constraint. The term \( \max (\mathbb{E}[||G(z)||_1] - \varepsilon, 0) \) enforces the average power constraint. The term \( \max (E_{\text{th}} - \mathbb{E}[bG(z) \ln(1 + cG(z))], 0) \) guarantees that the minimum energy harvesting requirement is satisfied.}

Following the steps in~\cite{Letizia_2023}, the complete value function for the proposed SLIPT system can be expressed as 
    \begin{align}
        &\mathcal{J}_{\text{total}}(G,\hspace{-0.5mm}D) \hspace{-0.5ex}= \hspace{-0.5ex}\alpha\mathbb{E}_{\mathbf{z}\sim p_{Z}(\mathbf{z})} \hspace{-0.5mm}\Big[\hspace{-0.5mm}\log(D(G(\mathbf{z}),\hspace{-0.5mm}H_1(G(\mathbf{z})),\hspace{-0.5mm}H_2(G(\mathbf{z}))))\Big]\nonumber\\
        &- \mathbb{E}_{\mathbf{z}\sim p_{Z}(\mathbf{z})} \Big[\log(D(G(\mathbf{z})),\pi (H_1(G(\mathbf{z}))),\pi (H_2(G(\mathbf{z}))))\Big]\nonumber\\
        &-\lambda_{A}\max{(||G(\mathbf{z})||_1-A,0)}-\lambda_{P}\max(\mathbb{E}[||G(\mathbf{z)}||_1]-\varepsilon,0)\nonumber \\
        &-\lambda_E\max(E_{th}-\mathbb{E}[||bG(\mathbf{z})\ln{1+cG(\mathbf{z})}||_1],0),
    \end{align}
where $\lambda_A$, $\lambda_P$, and $\lambda_E$ equal to 0 or 1, depending on which constraints are active, $\alpha>0$ is a constant, and $||\cdot||_1$ is the $L^1$ norm.

    \begin{algorithm}[!t]
     \caption{CIECL Algorithm}
     \begin{algorithmic}[1]\label{alg:a1}
     \renewcommand{\algorithmicrequire}{\textbf{Input:}}
     \renewcommand{\algorithmicensure}{\textbf{Output:}}
     \REQUIRE $N$ training steps, $K$ discriminator steps, $\alpha$.
     \ENSURE $G_{\theta_G}$, $D_{\theta_D}$.
     \FOR {$n = 1$ to $N$}
        \FOR {$k = 1$ to $K$}
            \STATE Sample batch of $m$ noise samples $\{\mathbf{z}^{(1)},\ldots,\mathbf{z}^{(m)}\}$ from $p_Z(\mathbf{z})$;\\
            \STATE Produce batch of $m$ channel input/output triplet samples $\{(\mathbf{x}^{(1)},\mathbf{y}^{(1)},\mathbf{y}'^{(1)}),\ldots,(\mathbf{x}^{(m)},\mathbf{y}^{(m)},\mathbf{y}'^{(m)})\}$ using the generator $G_{\theta_G}$, and channel models for LED/PD and LED/PV links;\\
            \STATE Shuffle $[\mathbf{y}, \mathbf{y}']$ pairs using $\pi(.)$ function and get input/output triplet samples $\{(\mathbf{x}^{(1)},\hat{\mathbf{y}}^{(1)},\hat{\mathbf{y}}'^{(1)}),\ldots,(\mathbf{x}^{(m)},\hat{\mathbf{y}}^{(m)},\hat{\mathbf{y}}'^{(m)})\}$;\\
            \STATE Update the discriminator by ascending its stochastic gradient:
            \vspace{-3mm}
            \begin{align}
                &\nabla \theta_D \frac{1}{m}\sum_{i=1}^{m} \alpha \log \left(D_{\theta_D}(\mathbf{x}^{(i)},\mathbf{y}^{(i)},\mathbf{y}'^{(i)})\right) \nonumber \\
                &- D_{\theta_D}(\mathbf{x}^{(i)},\hat{\mathbf{y}}^{(i)},\hat{\mathbf{y}}'^{(i)}).
            \end{align}
            \vspace{-3mm}
        \ENDFOR
        \STATE Sample batch of $m$ noise samples $\{\mathbf{z}^{(1)},\ldots,\mathbf{z}^{(m)}\}$ from $p_Z(\mathbf{z})$;
        \STATE Update the generator by ascending its stochastic gradient:
            \begin{align}
                &\nabla \theta_G \frac{1}{m}\sum_{i=1}^{m} \alpha \log \bigg(D_{\theta_D}(G_{\theta_G}(\mathbf{z}^{(i)}),H_1(G_{\theta_G}(\mathbf{z}^{(i)})),\nonumber \\
                &H_2(G_{\theta_G}(\mathbf{z}^{(i)}))\bigg)
                - D_{\theta_D}(G_{\theta_G}(\mathbf{z}^{(i)}),\pi(H_1(G_{\theta_G}(\mathbf{z}^{(i)}))), \nonumber \\
                &\pi(H_2(G_{\theta_G}(\mathbf{z}^{(i)})))).
            \end{align}
     \ENDFOR
     \RETURN $G_{\theta_G}$, $D_{\theta_D}$.
     \end{algorithmic} 
     \end{algorithm} 
\subsubsection{Training Strategy of the GAN}
    
The alternating training strategy in Algorithm \ref{alg:a1} results in optimal parameters for $G$ and $D$ under the assumption of having enough capacity. In general, it is reasonable to model $G$ and $D$ with neural networks $(G,D) = (G_{\theta_G}, D_{\theta_D})$ and optimize over their parameters. We assume that the distribution of the source $p_z(\mathbf{z})$ is in a multivariate normal distribution with independent components. The function $\pi(\cdot)$ is a derangement of the batch $\mathbf{y}'' = [\mathbf{y}, \mathbf{y}']$ and it is used to obtain unpaired samples. We use $K=10$ discriminator training steps for every generator training step. The neural networks for $G$ and $D$ are shown in Fig. \ref{fig:ANN}.
We use a neural network which includes three hidden layers and an output layer for the generator. The hidden layers are fully connected, and the activation function is rectified linear unit (ReLU). The output layer is a fully connected layer in which the activation function is a modified version of the ReLU function to account the PP constraint, and can be expressed as $ReLUA(x) = \min\left(\max\left(0,x\right),A\right)$~\cite{Dubey_2022}. 
Further, the PP, AP, and EH constraints can be met by the design of $G$ using a batch normalization layer as well as setting the training loss function adequately. The discriminator is a neural network that includes two hidden layers each followed by a leaky-ReLU layer and the output layer consists of a fully connected layer with softplus activation function~\cite{Dubey_2022}.

\subsubsection{{Complexity of the GAN}}

{To evaluate the feasibility of applying the proposed GAN-based framework in real-world SLIPT systems, we analyze its computational complexity in both the training and the inference phases. The training phase consists of iterative updates for both the generator network and the discriminator network, which are fully connected neural networks with three and two hidden layers, respectively. Let \( N \) denote the number of training samples per iteration, \( L_G \) and \( L_D \) be the number of layers in the generator and discriminator, respectively, and \( D \) represent the total number of neurons per layer. The per-iteration complexity of training the generator and discriminator is approximately as
    \begin{equation}
    O(N \times D \times L_G) + O(N\times D\times L_D) = O(N \times D \times (L_G + L_D)).
    \end{equation}
During training, the generator learns to optimize the input distribution, while ensuring mutual information maximization and satisfying the AP and the EH constraints. This adversarial optimization procedure requires multiple iterations until convergence, with the total complexity scaling as
    \begin{equation}
    O(E \times N \times D \times (L_G + L_D)),
    \end{equation}
where \( E \) denotes the number of training epochs. The computational cost is manageable on modern GPUs, and once trained, the generator can produce optimal input distributions instantaneously without additional iterative optimization. 
In the inference phase, the trained generator directly outputs an optimized input distribution in constant time complexity, i.e.,
    \begin{equation}
    O(D \times L_G).
    \end{equation}
By contrast, the BA algorithm operates over a discretized input space of $M$ points. Its per iteration complexity is $\mathcal{O}(M^2)$, and the total complexity becomes $\mathcal{O}(I M^2)$ for $I$ iterations. For high-accuracy approximation or fine-grained input spaces, $M$ must be large, significantly increasing both runtime and memory usage. Furthermore, incorporating nonlinear EH constraints into BA requires feasibility checks that add additional computational complexity. Our numerical results confirm that the GAN-based approach not only offers improved mutual information performance but also reduces complexity in constrained SLIPT settings.}

\vspace{-3mm}
\section{Numerical Results and Discussion}\label{Results}

    \begin{figure}[!t]
    \centering
    \includegraphics[width=0.8\linewidth]{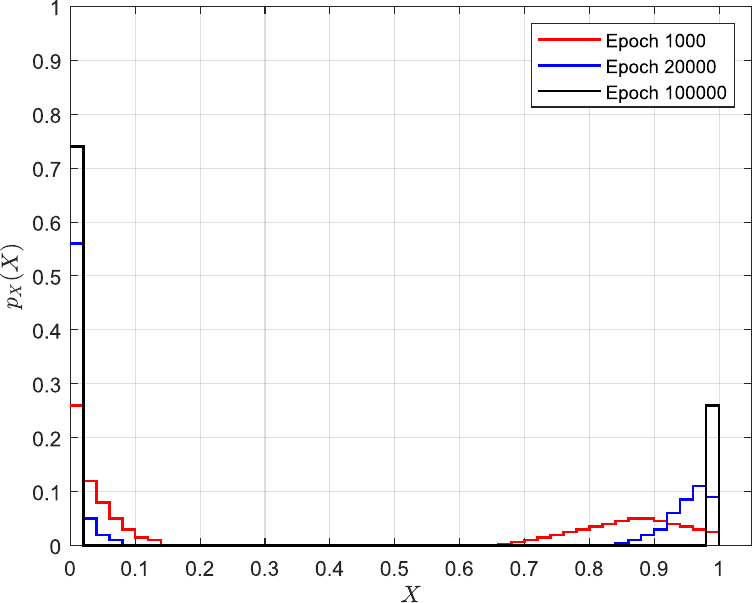}
    \caption{Optimal input distribution learned by CORTICAL
    at different training steps under lognormal channels with $A = 1$, $\varepsilon =1$, $E_{th} = 1$ mJ.} \label{figOI1}
    \vspace{-3mm}
    \end{figure}

In this section, we numerically evaluate the information-energy capacity region and the corresponding optimal input distribution. Unless stated otherwise, the parameter values are set to $a = 20$ W/A, $R_P = 0.5$ A/W, $R_E = 0.75$ A/W, $f_E = 0.5$, $T = 1$ s, $v_t = 25$ mV, $I_0 = 10^{-9}$ A, $\eta_t = \eta_r = 1$, $c(\lambda) = 0.03$ $\text{m}^{-1}$, $l = 10$ m, $\theta_{0} = 10^{\circ}$, $\theta_{{PD}} = 0^{\circ}$, $\theta_{{PV}} = 5^{\circ}$, $A_{{PD}} = 0.001$ $\text{m}^2$, $A_{{PV}} = 0.01$ $\text{m}^2$, $E_{th} = 1$ mJ, $\sigma_g^2 = 10^{-12}$, and $\sigma_{X_t}^2 = 0.1$~\cite{Ding_2018, Zhang_2019, Sait_2019}.

Fig. \ref{figOI1} shows the optimal input distribution obtained using the proposed GAN architecture for a setup with $A =1$, $\varepsilon =1$, and $E_{th} = 1$ mJ. The results are illustrated for different numbers of training epochs. The optimal input distribution is binary for the considered scenario where the PP level is very low. As an obvious result, the accuracy of the optimal input distribution increases with the number of training epochs. {In our work, we mitigate overfitting through several techniques, including 1) using small networks, 2) implementing validation, and 3) applying batch normalization. While 100,000 epochs is set as the maximum allowable training duration, the actual number of training iterations is dynamically determined based on an early stopping criterion, which prevents unnecessary training beyond the point of convergence.}

    \begin{figure}[!t]
    \centering
    \includegraphics[width=0.8\linewidth]{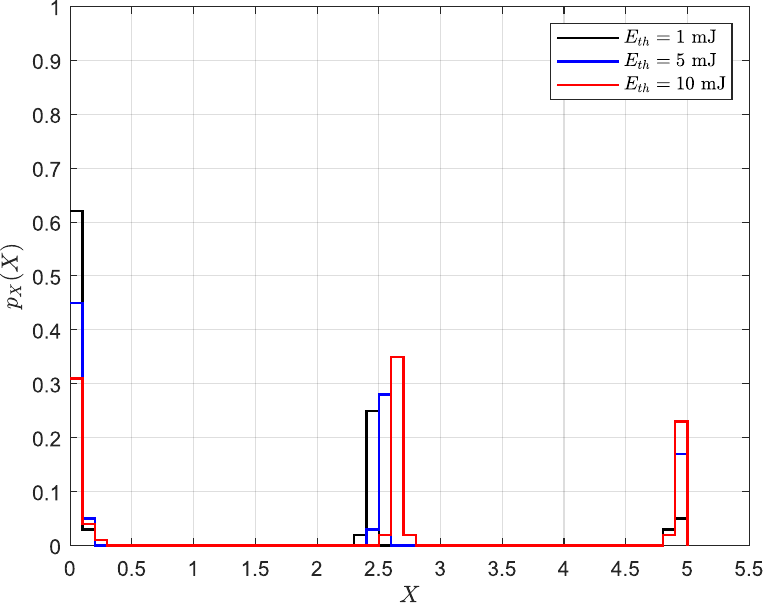}
    \caption{Optimal input distribution learned by CORTICAL at different $E_{th}$ values. $A= 5$, $\varepsilon = 2.5$, and the training epoch is $100000$ in all cases.} \label{figOI2}
        \vspace{-5mm}
    \end{figure}
    
Fig. \ref{figOI2} shows the optimal input distribution obtained using the proposed GAN architecture for a setup with $A =5$, $\varepsilon = 2.5$, and $10^5$ epochs. Compared to Fig. \ref{figOI1}, the PP constraint is increased and, as a result, the number of mass points has been increased to three. The results show that the change of the EH constraint ($E_{th}$) significantly affects the optimal input distribution. Specifically, a low $E_{th}$ value makes the distribution biased towards zero and thus the distribution is close to an exponential distribution. Moreover, a high $E_{th}$ results in biasing the distribution towards higher values of $X$ and thus distribution deviates from the exponential distribution to increase the amount of harvest energy. In particular, the mass point around $X = 2.5$ further moves towards higher values when $E_{th}$ is increased.

    \begin{figure}[t]
    \centering
    \includegraphics[width=0.78\linewidth]{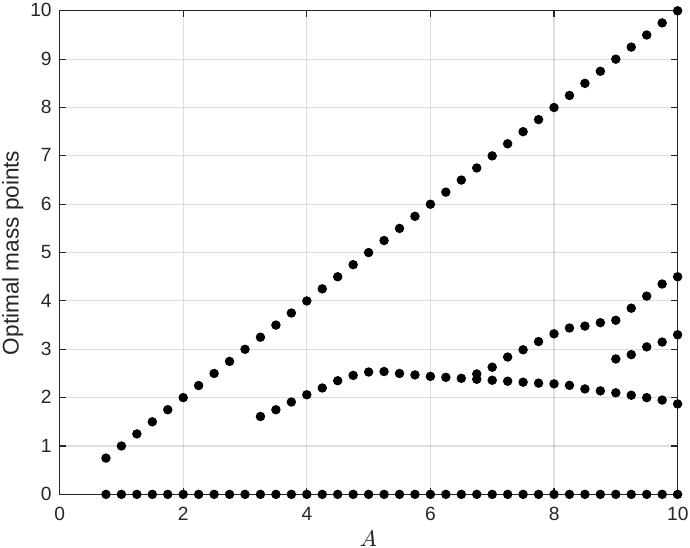}
    \vspace{-0.1cm}
    \caption{{The optimal mass points versus PP constraint; $\varepsilon = 2.5$, $E_{th}= 1 $mJ.}} \label{figmass}
    \vspace{-6mm}
    \end{figure}
{Fig. \ref{figmass} illustrates the optimal mass points obtained for various PP constraints, $A$. At low values of $A$, the problem becomes infeasible as the EH constraint cannot be satisfied. The number of mass points increases from 2 to 3, 3 to 4, and 4 to 5 at $A = 3.5, 6.5, \text{and}, 9$, respectively. However, at higher values of $A$, the AP constraint causes the mass points to concentrate at lower values.}
    
    \begin{figure}[t]
    \centering
    \includegraphics[width=0.9\linewidth]{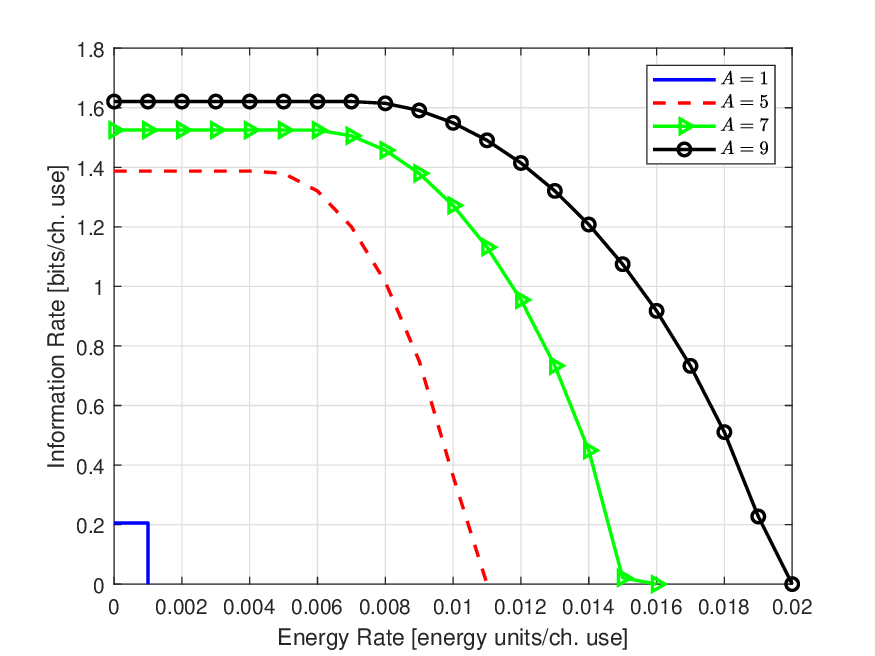}
    \vspace{-0.1cm}
    \caption{Information-energy capacity region with different PP constraints; $P=30$ dBW.} \label{figPP}
    \vspace{-6mm}
    \end{figure}

Fig. \ref{figPP} shows the information-energy capacity region for different PP constraints. The region corresponding to this scenario is determined through the resolution of the optimization problem presented in \eqref{Eqopt}. A notable trade-off emerges between the information rate and the energy rate. This trade-off becomes apparent as higher EH constraints lead the transmitter to choose a symbol with greater amplitude, consequently reducing the performance of information transfer. It is noteworthy that, for lower amplitude constraints, there is no trade-off between the two objectives. In this regime, the optimal input distribution is binary, maximizing both information and energy transfer simultaneously. 
    
Fig. \ref{figcomp} highlights the impact of the channel on the information-energy capacity region. The comparison reveals a noticeable gap between the two regions, and this discrepancy is primarily attributed to the adverse effects of the lognormal channel. Specifically, we observed that the information-energy capacity region of the Gaussian channel outperforms that of the lognormal channel. This gap in performance is due to the inherent characteristics of the lognormal channel, which introduces signal attenuation and fluctuations.
    
    \begin{figure}[t]
    \vspace{-3mm}
    \centering
    \includegraphics[width=0.9\linewidth]{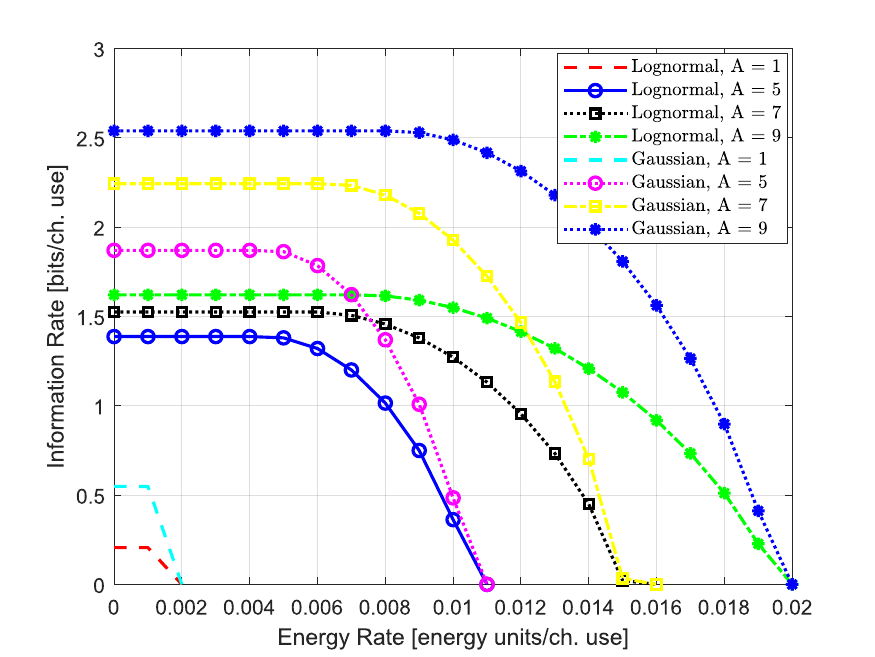}
    \vspace{-0.1cm}
    \caption{{Effect of the channel on the information-energy capacity region; $A=\{1,5,7,9\}$, $P=30$ dBW.}} \label{figcomp}
    \vspace{-5mm}
    \end{figure}

    \begin{figure}[t]
    \centering
    \includegraphics[width=0.8\linewidth]{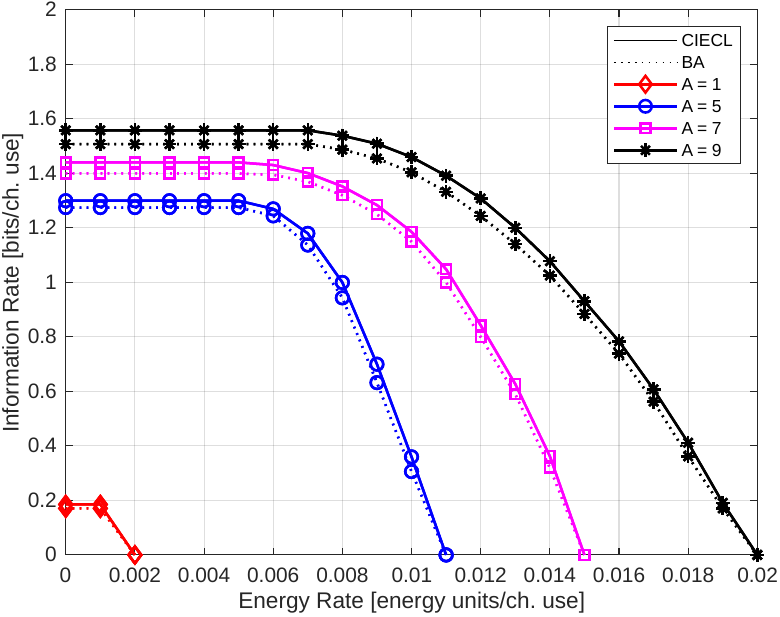}
    \vspace{-0.1cm}
    \caption{{Performance comparison of GAN and BA algorithms for capacity optimization in constrained SLIPT channels.}} \label{figGAN}
    \vspace{-5mm}
    \end{figure}
    
{An important observation in Fig. 7 and Fig. 8 is that, for low EH constraints, the energy rate increases without affecting the information rate. This observation occurs because, in this regime, the EH constraint is inactive, meaning that the optimal input distribution remains dictated solely by the mutual information maximization criterion. As a result, additional power allocation increases the harvested energy without forcing a trade-off with information transfer.
In SLIPT systems, the EH constraint is only active when the harvested energy falls below a specified threshold \( E_{\text{th}} \). When this constraint is not active, the optimal input distribution is determined independently of EH considerations, thus there is no trade off between information and power transfer in this regime. However, beyond a certain EH threshold, the system enters a regime where the EH constraint becomes active, meaning that further power increases must be allocated in a way that ensures sufficient EH. This modifies the optimal input distribution, shifting it away from the mutual information transfer optimal solution. As a result, a trade-off appears, where maximizing harvested energy leads to a corresponding reduction in information rate.} {To evaluate the effectiveness of our GAN-based approach, we compare it to the classical BA algorithm under identical PP, AP, and EH constraints. As shown in Fig. \ref{figGAN}, the GAN method consistently outperforms BA in achievable mutual information, particularly in high-EH or high-SNR regimes. This performance gap is attributed to the GAN’s ability to efficiently explore the constrained input space and adapt to nonlinearities introduced by the EH model. Furthermore, GAN exhibits faster convergence and better scalability in terms of runtime complexity, making it well-suited for real-time or adaptive SLIPT implementations.}

\vspace{-5mm}
\subsection{Practical Implications and Design Guidelines} The theoretical and numerical results presented in this work yield several key insights and practical guidelines for the design of future SLIPT systems. The proof that the capacity-achieving distribution is discrete with a finite number of mass points (Theorem 3) provides a fundamental design rule. It indicates that transmitters should not use continuous Gaussian-like signals but rather simple, finite-alphabet schemes. This finding significantly reduces the complexity of the signal design problem, guiding engineers to optimize a limited set of amplitude levels and their probabilities. Our analysis identifies clear transition points where the optimal input distribution changes structure (e.g., from binary to ternary, as shown in Fig. 6). This leads to a concrete adaptive modulation strategy. Specifically, When the peak-power constraint $A$ or the EH requirement is low, a simple binary modulation (e.g., OOK) is optimal. This is highly desirable for low-complexity, cost-sensitive devices like IoT sensors. As the power budget or energy demand increases, the optimal distribution acquires more mass points. This mandates a shift to multi-level amplitude modulation to efficiently balance the information-energy trade-off. System designers can use our results (e.g., Fig. 6) to pre-determine these transition points for their specific hardware constraints.

\vspace{-5mm}
\section{Conclusion}\label{conclusion}
In this paper, we presented a comprehensive analysis of the information-energy capacity region for SLIPT systems over lognormal-fading channels where nonlinear EH, and PP/AP constraints were considered. We derived novel expressions for the conditional probability distribution for the lognormal channel. By using Smith's framework and appropriate Hermite polynomial bases, we proved that the optimal input distribution is discrete with a finite number of mass points. Building upon this, we went beyond  characterizing the properties of the optimal input distribution by deriving an analytical expression for an achievable input distribution in the high SNR regime for the lognormal channel. This result offers a deeper understanding of SLIPT system performance in high SNR conditions, where the effects of noise are negligible. Additionally, we analyzed the optimal input distribution under low PP constraints, identifying a necessary condition on the transition point where the binary input distribution is no longer optimal. This transition marks an important trade-off between information and energy transfer in SLIPT systems. Finally, we introduced the CIECL framework, employing deep learning techniques (GANs) to generate the optimal  input distribution and estimate the information-energy capacity region for the proposed SLIPT system. Numerical results of the information-energy capacity region were presented and compared with conventional AWGN channels. Results reveal that lognormal fading significantly reduces the information-energy capacity region as compared to conventional AWGN channel conditions. Our results are helpful to identify the optimal input distributions under different PP, AP, and EH constraints. At low PP regime, the optimal input distribution is binary and no trade-off between the information rate and the EH is observed. In addition, the optimal input distribution is biased towards the zero under a low EH constraint, while it spreads across the entire signal range under a high EH constraint.

\vspace{-2mm}
\begin{appendices}
 \section{Proof of Theorem 1}
 \label{ProofOfTheorem1}
 Despite a great deal of similarity with the proof in \cite{Smith_1971,Morsi}, a sketch of the proof is presented for the sake of completeness. By following similar steps as \cite{Smith_1971}, the following properties hold:
\begin{enumerate}
    \item Continuity of the Mutual Information Functional: The mutual information functional \( I(F) \) is a weakly continuous function. That is, for any sequence of probability distributions \( \{F^{(n)}\} \) that converges weakly to \( F \), the corresponding sequence of mutual information values \( I(F^{(n)}) \) converges to \( I(F) \).
    \item Strict Concavity of the Mutual Information Functional: The mutual information functional \( I(F) \) is strictly concave over the set of feasible distributions \( \Omega \). This concavity ensures that there is a unique maximum point.
    \item Compactness of the Feasible Set \( \Omega \): The set \( \Omega \) is compact under the weak topology, which implies that an optimal distribution \( F^* \) exists.
\end{enumerate}
By the extreme value theorem, since \( I(F) \) is continuous and \( \Omega \) is compact, \( I(F) \) attains its maximum on \( \Omega \). The strict concavity of \( I(F) \) further ensures that this maximum is unique. Thus, there exists a unique distribution \( F^* \in \Omega \) that maximizes \( I(F) \), achieving the capacity \( C \). This completes the proof.

\vspace{-5mm}
\section{Proof of Corollary 1}\label{ProofOfCorollary1}
We show that strong duality holds for the optimization problem in (11), expressed as \( C = \sup_{F \in \Omega} I(F) \), subject to the constraints \( g_1(F) \leq 0 \) and \( g_2(F) \leq 0 \), where \( g_1(F) = \int_0^A x \, dF(x) - \varepsilon \) and \( g_2(F) = \int_0^A b x \ln(1 + c x) \, dF(x) - E_{th} \). To establish strong duality, we first define the Lagrangian function \( \mathcal{L}(F, \lambda_1, \lambda_2) = I(F) - \lambda_1 g_1(F) - \lambda_2 g_2(F) \), where \( \lambda_1 \geq 0 \) and \( \lambda_2 \geq 0 \) are Lagrange multipliers, and the dual function \( D(\lambda_1, \lambda_2) = \sup_{F \in \Omega} \mathcal{L}(F, \lambda_1, \lambda_2) \). Next, we verify Slater’s condition by constructing a feasible distribution \( F_0 \in \Omega \) concentrated at \( x_0 \in [0, A] \) such that \( g_1(F_0) = x_0 - \varepsilon < 0 \) and \( g_2(F_0) = b x_0 \ln(1 + c x_0) - E_{th} < 0 \). Such \( x_0 \) exists due to the linear and logarithmic nature of the constraints. Since Slater’s condition holds, the duality gap is zero, and strong duality follows: \( C = \inf_{\lambda_1 \geq 0, \lambda_2 \geq 0} D(\lambda_1, \lambda_2) = \sup_{F \in \Omega} \mathcal{L}(F, \lambda_1, \lambda_2) \). Thus, there exist non-negative Lagrange multipliers \( \lambda_1 \geq 0 \) and \( \lambda_2 \geq 0 \) such that strong duality holds for the optimization problem (7). For further details, please refer to \cite{Smith_1971}.
\section{Proof of Theorem 2}
\label{ProofOfTheorem2}
The proof relies on the necessary and sufficient conditions for optimality in constrained optimization problems. The problem involves maximizing the mutual information functional \( I(F) \) over the set \( \Omega \), subject to the constraints defined in equations (11a), (11b), and (11c). We begin by defining the Lagrangian function \( \mathcal{L}(F, \lambda_1, \lambda_2) = I(F) - \lambda_1 g_1(F) - \lambda_2 g_2(F) \), where \( g_1(F) = \int_0^A x \, dF(x) - \varepsilon \) and \( g_2(F) = E_{th} - \int_0^A b x \ln(1 + c x) \, dF(x) \). The necessary and sufficient conditions for optimality require that the derivative of the Lagrangian with respect to \( F \) is non-positive for all feasible distributions \( F \in \Omega \) and zero at the optimal distribution \( F^* \). Substituting the definitions of \( I(F) \), \( g_1(F) \), and \( g_2(F) \) into the Lagrangian, we obtain \( \frac{\partial \mathcal{L}(F, \lambda_1, \lambda_2)}{\partial F} = \int_0^A \left( i(x; F) - \lambda_1 x + \lambda_2 b x \ln(1 + c x) \right) dF(x) \). For the optimal distribution \( F^* \), this derivative must equal zero, while for all other distributions \( F \in \Omega \), it must satisfy \( \int_0^A ( i(x; F^*) - \lambda_1 x + \lambda_2 b x \ln(1 + c x) ) dF(x) \leq C - \lambda_1 \epsilon + \lambda_2 E_{th} \). Thus, the Lagrange multipliers \( \lambda_1 > 0 \) and \( \lambda_2 > 0 \) must be chosen such that this inequality holds for all \( F \in \Omega \), completing the proof. For further details, please refer to \cite{Smith_1971}.

\vspace{-4mm}
\section{Proof of Corollary 2}
 \label{ProofOfCorollary2}

The proof extends the necessary and sufficient conditions for optimality given by Theorem 2 to characterize the support of the optimal distribution \( F^* \).

First, recall that Theorem 2 establishes that \( F^* \) satisfies
\vspace{-1mm}
    \begin{align}
        \int_0^A ( i(x; F^*) &- \lambda_1 x + \lambda_2 b x \ln(1 + c x) ) dF(x) \nonumber \\
        &\leq C - \lambda_1 \varepsilon + \lambda_2 E_{th},
    \end{align}
    \vspace{-1mm}
where \( i(x; F^*) \) denotes the marginal information density.

Now, define the auxiliary functions $A_1(x) = x$, $A_2(x) = -b x \ln(1 + c x)$, $a_1 = \varepsilon$, and $a_2 = -E_{th}$. Using these definitions, Theorem 2 can be rewritten as            \begin{align}
        \int_0^A ( i(x; F^*) &- \lambda_1 A_1(x) + \lambda_2 A_2(x) ) dF(x) \nonumber \\
        &\leq C - \lambda_1 a_1 + \lambda_2 a_2.
    \end{align}

We need to show that this inequality holds, if and only if
    \begin{equation}
        i(x; F^*) \hspace{-0.5ex}\leq \hspace{-0.5ex}C + \lambda_1(A_1(x) - a_1) + \lambda_2(A_2(x) - a_2), \forall x \in [0, A],
    \end{equation}
    \vspace{-2mm}
and
\vspace{-2mm}
    \begin{equation}
        i(x; F^*) \hspace{-0.5ex}= \hspace{-0.5ex}C + \lambda_1(A_1(x) - a_1) + \lambda_2(A_2(x) - a_2), \hspace{-0.5ex} \forall x \hspace{-0.5ex}\in \hspace{-0.5ex}\text{Supp}(F^*).
    \end{equation}

To prove the "if" part, assume that the inequalities above hold. Then, by following similar steps as \cite{Smith_1971}. for any  distribution \( F \in \Omega \)
    \begin{align}
    &\int_0^A ( i(x; F^*) - \lambda_1 A_1(x) + \lambda_2 A_2(x) ) dF(x)\nonumber\\ 
    &\leq  C - \lambda_1 a_1 + \lambda_2 a_2.
    \end{align}

For the "only if" part, assume that the inequality in Theorem 2 holds but not the equality conditions. Then, there exists some \( x_0 \in \text{Supp}(F^*) \) such that
    \begin{equation}
        i(x_0; F^*) > C + \lambda_1(A_1(x_0) - a_1) + \lambda_2(A_2(x_0) - a_2).
    \end{equation}
Consider a step function distribution concentrated at \( x_0 \). The integral becomes
    \begin{align}
        \int_0^A \big( i(x_0; F^*) &- \lambda_1 A_1(x_0) + \lambda_2 A_2(x_0) \big) dF(x_0) \nonumber \\
        &> C - \lambda_1 a_1 + \lambda_2 a_2,
    \end{align}
which contradicts the original assumption. Therefore, \( i(x; F^*) \) must satisfy the equality conditions for all \( x \in \text{Supp}(F^*) \). This completes the proof.

\vspace{-3mm}
 \section{Proof of Theorem \ref{Theorem 3}}
 \label{ProofOfTheorem3}
This appendix derives the average channel transition probability $p_{Y|X}(y|x)$ for the lognormal fading channel, which is the cornerstone for calculating the mutual information in our ergodic, non-coherent capacity framework. This probability, $p(y|x)=\mathbb{E}_{h_t}[p(y|x,h_t)]$ with $C=\sup_{F \in \Omega} I(F)$.
 We show that the equality in \eqref{equ:fequal} can not be satisfied on a set of points that has an accumulation point, which indicates that the set $\text{Supp}(F^*)$ must be discrete and the optimal input $X$ must be a discrete random variable. We start with the necessary and sufficient conditions for the optimality of $F^*$. Extending the necessary and sufficient conditions of Corollary \ref{Corollary 2} to the complex domain, the LHS of \eqref{equ:fequal} is reduced to 
    \begin{align}\label{equ:s_z}
        s(z) &= \lambda_1(z-\varepsilon)-\lambda_2(bz\ln(1+cz)-E_{th})+C \nonumber \\
        &-\int p(y|z)\log_2\frac{{p(y|z)}}{{p(y;F^*)}}\d y, \quad z \in \mathcal{D}, 
    \end{align}
    where the domain $\mathcal{D}$ is defined by $\Re(z) > 0$. The function $s(z)$ is analytic over the complex domain, since the linear function, and the logarithmic function are all analytic. The necessary condition for the optimal input distribution $F^*$ is that $s(z)$ must be zero $\forall z \in \text{Supp}(F^*)$. However, from the identity theorem, if the set $\text{Supp}(F^*)$ has an accumulation point and the analytic function $s(z)=0, \forall z \in \mathcal{D}$, then $s(z)$ is necessarily zero over the whole $\mathcal{D}$, and hence, from \eqref{equ:s_z}, we have
        \begin{align} \label{equ:logy}
        &\int p(y|z)\log_2{p(y;F^*)}\d y = -C -\lambda_1(z-\varepsilon)\nonumber \\
        &+\lambda_2(bz\ln(1+cz)-E_{th})+\int p(y|z)\log_2{p(y|z)}\d y.
        \end{align}

\subsection{Conditional Probability of the Channel}
The conditional probability $p_{Y|X}(y|x)$ of the lognormal channel is not present in the literature. Hence, we derive $p_{Y|X}(y|x)$, which will be used to derive the properties of the optimal input distribution. According to~\eqref{equ:e1}, the random variable $Y$ of the channel output can be represented as $Y = X_1 + N$, where $X_1 = aR_Ph_{1,l}H_tX$, $H_t$ is the lognormal fading, $X$ is the channel input, and $N$ is the AWGN. By using the basic properties of pdfs, the pdf of $Y$ is the convolution of pdfs of $X_1$ and $N$. Hence, for a given input $X$
    \begin{align}\label{equ:convolution}
        p_{Y|X}(y|x) = \int_{0}^{+\infty} p_{X_1|X}(t|x) p_{N}(y-t)\d t,
    \end{align}
where $p_{X_1|X}(t|x)$ is the pdf of $X_1$ for a given input $X$, and $p_N(t)$ is the pdf of $N$, respectively. By using the variable transformation from $H_t$ to $X_1$ in~\eqref{equ:logn}, $p_{X_1|X}(x_1|x)$ can be deduced, and hence, the conditional probability in~\eqref{equ:convolution} is written as
    \begin{align}
        \label{equ:convolution2}
        &p_{Y|X}(y|x) = \frac{1}{4\pi\sigma_g\sigma_{X_l}}\int_{0}^{+\infty} \frac{1}{t} \\
        &\times\exp{-\frac{\left(\ln {\left(\frac{t}{aR_Ph_{1,l}x}\right)}-2\mu_{X_l}\right)^2}{8\sigma_{X_l}^2}}\exp{-\frac{(y-t)^2}{2\sigma_g^2}}\d t. \nonumber
    \end{align}
By apply the variable transformation $k = \ln\left(\frac{t}{aR_Ph_{1,l}x}\right)$ and after several manipulations, we obtain
    \begin{align}\label{equ:convolution3}
        &p_{Y|X}(y|x) = \frac{e^{-\frac{y^2}{2\sigma_g^2}}}{4\pi\sigma_g\sigma_{X_l}}\int_{-\infty}^{+\infty} \exp\left(-\frac{\left(k-2\mu_{X_l}\right)^2}{8\sigma_{X_l}^2}\right) \nonumber \\  &\times\exp\left(\frac{aR_Ph_{1,l}xe^k}{\sigma_g}\frac{y}{\sigma_g}-\left(\frac{aR_Ph_{1,l}xe^k}{\sigma_g}\right)^2\right)\d k.
    \end{align}
It can be identified that the second exponential term inside the integration is in the form of the generating function of the Hermite polynomial which is $e^{xt-\frac{t^2}{2}}=\sum_{n=0}^{\infty}H_{e_n}(x)\frac{t^n}{n!}$, where $H_{e_n}(x)$ is the Hermite polynomial of $n$-th kind~\cite{Fah_2012}. Applying this relation, interchanging the order of summation with integration, and with some rearrangements, \eqref{equ:convolution3} can be written as
    \begin{align}\label{equ:convolution4}
        \hspace{-0.5ex}p_{Y|X}(y|x) &= \frac{e^{-\frac{y^2}{2\sigma_g^2}}}{4\pi\sigma_g\sigma_{X_l}}\sum_{n=0}^{\infty}\frac{1}{n!}H_{e_n}\left(\frac{y}{\sigma_g}\right)\left(\frac{aR_Ph_{1,l}x}{\sigma_g}\right)^n \nonumber \\
        &\times\int_{-\infty}^{+\infty} \exp{\left(-\frac{\left(k-2\mu_{X_l}\right)^2}{8\sigma_{X_l}^2}+nk\right)}\d k.
    \end{align}
The integration in \eqref{equ:convolution4} can be identified as a standard result of the integration as follows
    \begin{equation}
    \int_{-\infty}^{\infty}\exp{(-ak^2+bk-c)\d k} = \sqrt{\frac{\pi}{a}}\exp{\left(\frac{b^2}{4a}-c\right)},
    \end{equation}
where $a = \frac{1}{8\sigma_{X_l}^2}$, $b = \frac{\mu_{X_l}+2n\sigma_{X_l}^2}{2\sigma_{X_l}^2}$, and $c = \frac{\mu_{X_l}^2}{2\sigma_{X_l}^2}$. With simplifications, the final expression can be expressed as
 \begin{align}\label{equ:conditional_f}
        p_{Y|X}(y|x) = \frac{e^{-\frac{y^2}{2\sigma_g^2}}}{\sqrt{2\pi\sigma_g^2}}\sum_{n=0}^{\infty}\frac{1}{n!}  H_{\mathrm{e}_n}\left(\frac{y}{\sigma_g}\right) K_{n}^n x^n,
    \end{align}
where $K_n=aR_Ph_{1,l}e^{2(n\sigma_{X_l}^2+\mu_{X_l})}$. Similar to \cite{Fah_2012}, we set $\sigma_g^2=1$ to simplify the proof without the loss of generality. 

Now, $\log_2\left(p(y;F^*)\right)$ in \eqref{equ:logy} is a continuous function of $y$ and is a square integrable with respect to $e^{-y^2/2}$. As such, it can be written in terms of the Hermite polynomials as 
\vspace{-3mm}
\begin{equation}
\label{eq:pdf}
\log_2\left(p(y;F^*)\right)=\sum_{m=0}^{\infty}c_m   H_{\mathrm{e}_m}(y).
\end{equation}
Next, by using \eqref{equ:conditional_f}, \eqref{eq:pdf}, and the orthogonality property of the Hermite polynomials with respect
to $e^{-y^2/2}$ i.e., $\int_{-\infty}^{\infty} e^{-y^2/2} H_{\mathrm{e}_m}(y) H_{\mathrm{e}_n}(y)=m!\sqrt{2 \pi}$,
if $m=n$ zero and otherwise~\cite{Morsi}, $\int p(y|z)\log{p(y;F^*)}\d y$ is expressed as
\vspace{-3mm}
\begin{equation}\label{equ:p_yF}
    \int p(y|z)\log_2{p(y;F^*)}\d y=\sum_{m=0}^{\infty}c_m (K_mz)^m. 
\end{equation}
Let $V_n=f^{(n)}[bx\ln(1+cx)](0)$. Then, the Taylor expansion of $bz \ln(1+cz)$ is given by
\begin{equation}\label{equ:plog}
    bz \ln(1+cz)=\sum_{m=0}^{\infty}V_mz^m.
\end{equation}
Similarly, $\log_2\left(p(y|z)\right)$ is a continuous function of $y$ and is square integrable with respect to $e^{-y^2/2}$, and hence, it can be written in terms of the Hermite polynomials as 
\begin{equation}\label{equ:log_py}
    \log_2\left(p(y|z)\right)=\sum_{m=0}^{\infty}a_m   H_{\mathrm{e}_m}(y).
\end{equation}
Hence, with the use of \eqref{equ:conditional_f}, \eqref{equ:log_py}, and the orthogonal property of Hermite polynomial, the term $\int p(y|z)\log_2{p(y|z)}\d y$ is written as
\begin{equation}\label{equ:pyz}
    \int p(y|z)\log_2{p(y|z)}\d y=\sum_{m=0}^{\infty}a_m (K_mz)^m. 
\end{equation}
With the use of \eqref{equ:p_yF}, \eqref{equ:plog}, and \eqref{equ:pyz}, the expression in \eqref{equ:s_z} is reduced to 
\begin{align}
    \sum_{m=0}^{\infty}c_m (K_mz)^m&=\lambda_2\left(\sum_{m=0}^{\infty}V_mz^m-E_{th}\right)-\lambda_1\left(z-\varepsilon\right) \nonumber \\
    &-C+\sum_{m=0}^{\infty}a_m (K_mz)^m.
\end{align}
Equating the coefficients of $z^m$, we have

\vspace{-3mm}
\begin{IEEEeqnarray}{ccl}
    c_0&=&\lambda_2\left(V_0-E_{th}\right)+\lambda_1\varepsilon-C+a_0,\\
    c_1&=&\frac{\lambda_2V_1-\lambda_1+a_1K_1}{K_1},\\
    \label{eq:coef}
    c_m&=&\frac{\lambda_2V_m+a_mK_m^m}{K_m^m}, \hspace{2ex} \text{for all $m >1$}.
\end{IEEEeqnarray}
By inserting \eqref{eq:coef} into \eqref{eq:pdf}, the following holds
\begin{equation}
    p(y;F^*)=e^{\ln(2)\sum_{m=0}^{\infty}c_m   H_{\mathrm{e}_m}(y)}.
\end{equation}
Next, by following similar steps as \cite{Morsi}, it can be shown that $p(y;F^*)$ can not be a valid output distribution. Hence, $\text{Supp}(F^*)$ can not have an accumulation point and $X^*$ must be a discrete random variable. This completes the proof.

\end{appendices}

\vspace{-3mm}
\bibliographystyle{IEEEtran}
\bibliography{IEEEabrv,Main}

\begin{IEEEbiography}[{\includegraphics[width=1in,keepaspectratio]{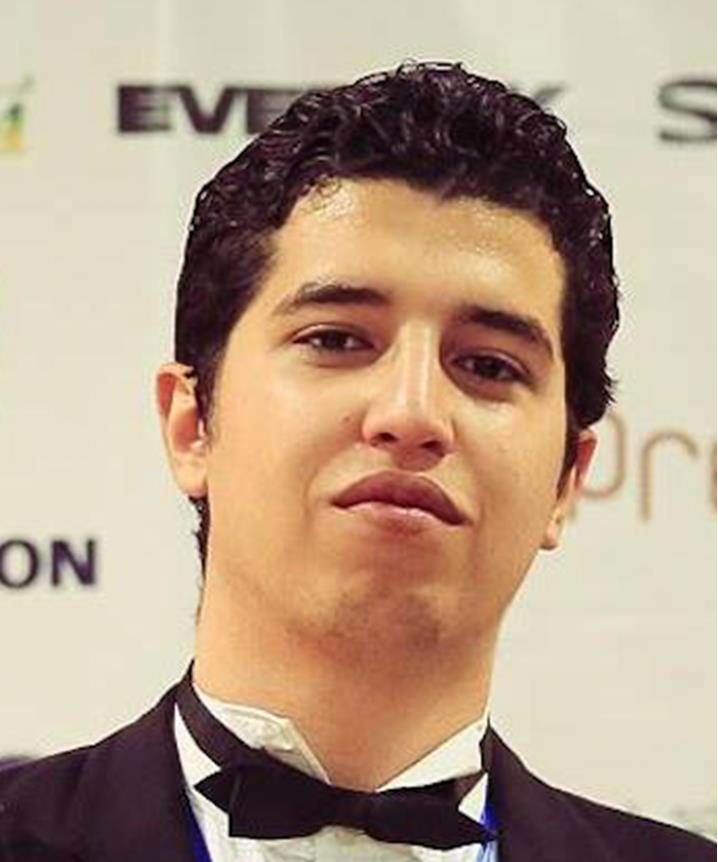}}]{Nizar Khalfet} (Member, IEEE) received the B.Sc.
degree in electrical engineering from Sup’Com,
Tunisia, the M.Sc. degree from CentraleSupélec,
Paris, France, in 2015, and the Ph.D. degree in
electrical engineering from the Institut National
de Recherche en Informatique et en Automatique
(INRIA), France. Since 2020, he has been a
Researcher with the IRIDA Research Centre for
Communication Technologies, University of Cyprus.
His research interests include the intersection of
information theory, intelligent reflected surfaces, and
communications theory.
\end{IEEEbiography}

\begin{IEEEbiography}[{\includegraphics[width=1in,keepaspectratio]{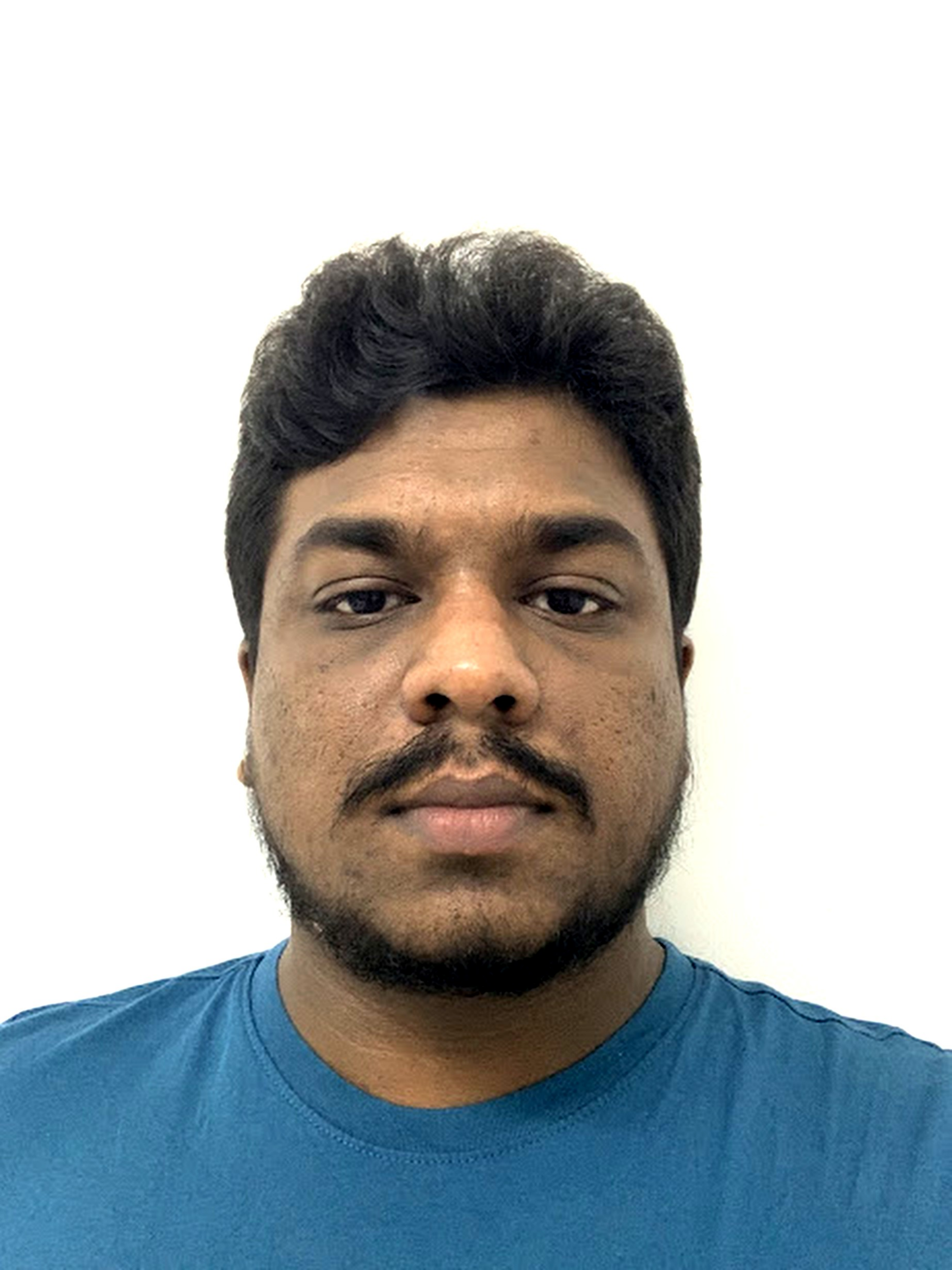}}]{Kapila W. S. Palitharathna}
(Member, IEEE) received the B.Sc. Eng. (Hons.) and Ph.D. degrees in Electrical and Electronic Engineering from the University of Peradeniya, Sri Lanka, in 2016, and in 2022. He is currently a Postdoctoral Research Associate at the IRIDA Research Centre for Communication Technologies, Department of Electrical and Computer Engineering, University of Cyprus. He has worked as a Research Assistant, a Senior Lecturer, and the Head of the Department (Telecommunication) at the Sri Lanka Technological Campus, Sri Lanka, from 2018 to 2023. He also worked as an Instructor at the University of Peradeniya, Sri Lanka, from 2016 to 2017. His research interests include visible light communication, underwater optical wireless communication, non-orthogonal multiple access, intelligent reflective surfaces, energy harvesting communications, and machine learning for communication. 
\end{IEEEbiography}

\begin{IEEEbiography}[{\includegraphics[width=1in,keepaspectratio]{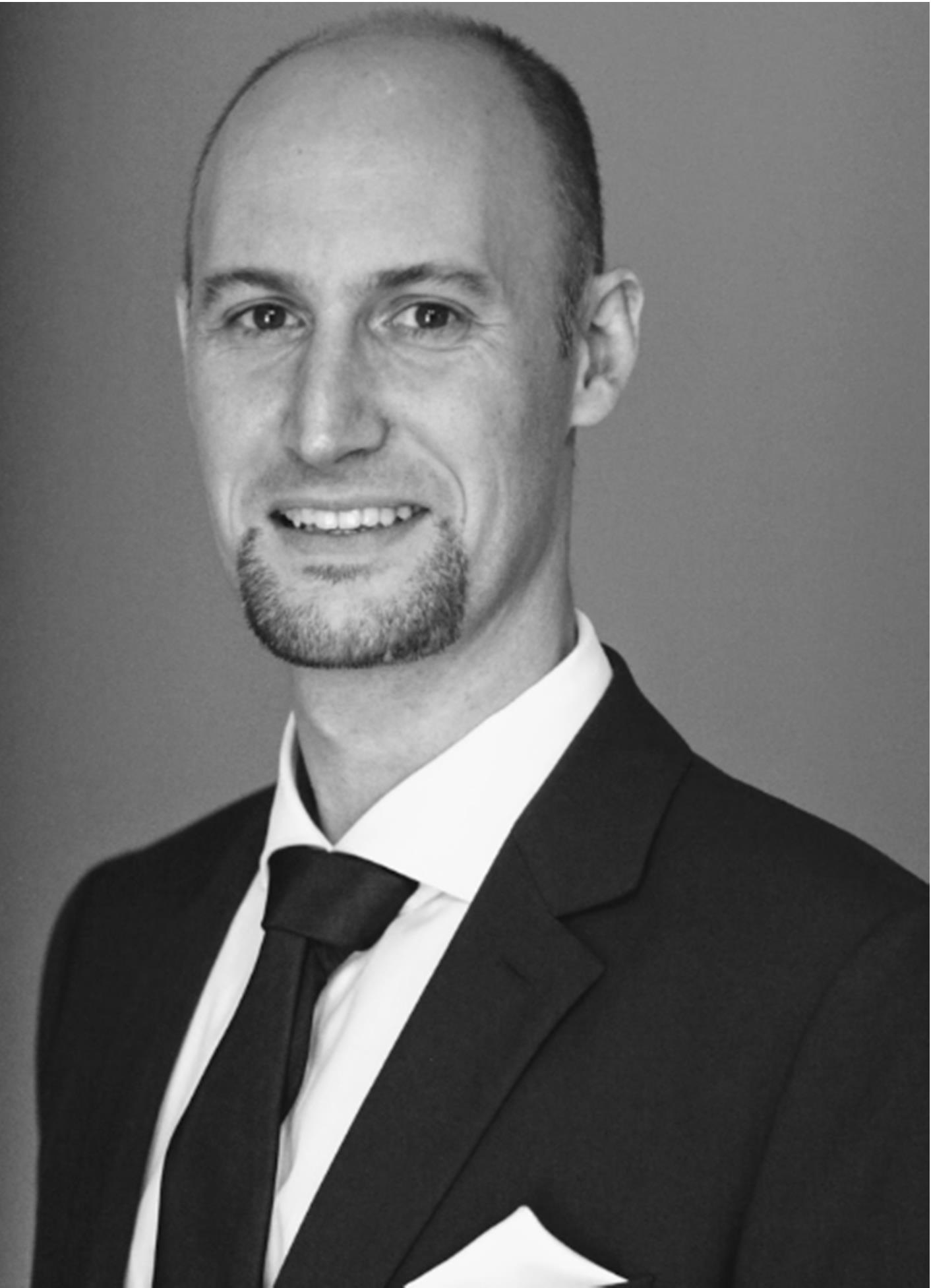}}]{Symeon Chatzinotas} (Fellow, IEEE) received the M.Eng. in Telecommunications from Aristotle University of Thessaloniki, Greece and the M.Sc. and
Ph.D. in Electronic Engineering from University of Surrey, UK in 2003, 2006 and 2009 respectively. He is currently Full Professor/Chief Scientist I and Head of the research group SIGCOM in the Interdisciplinary Centre for Security, Reliability and Trust, University of Luxembourg. In parallel, he is an Adjunct Professor in the Department of Electronic Systems, Norwegian University of Science and Technology, an Eminent Scholar of the Kyung Hee University, Korea and a Collaborating Scholar of the Institute of Informatics \& Telecommunications, National Center for Scientific Research ”Demokritos”. In the past, he has been a Visiting Professor at EPFL, Switzerland and University of Parma, Italy and contributed in numerous R \& D projects for the Institute of Telematics and Informatics, Center of Research and Technology Hellas and Mobile Communications Research Group, Center of Communication Systems Research, University of Surrey. He has authored more than 800 technical papers in refereed international journals, conferences and scientific books and has received numerous awards and recognitions, including the IEEE Fellowship, IEEE Distinguished Contributions Award and IEEE Harry Rowe Mimno Award. He has served in the editorial board of npj Wireless Technology, IEEE Transactions on Communications, IEEE Open Journal of Vehicular Technology and the International Journal of Satellite Communications and Networking.
\end{IEEEbiography}

\begin{IEEEbiography}[{\includegraphics[width=1in,keepaspectratio]{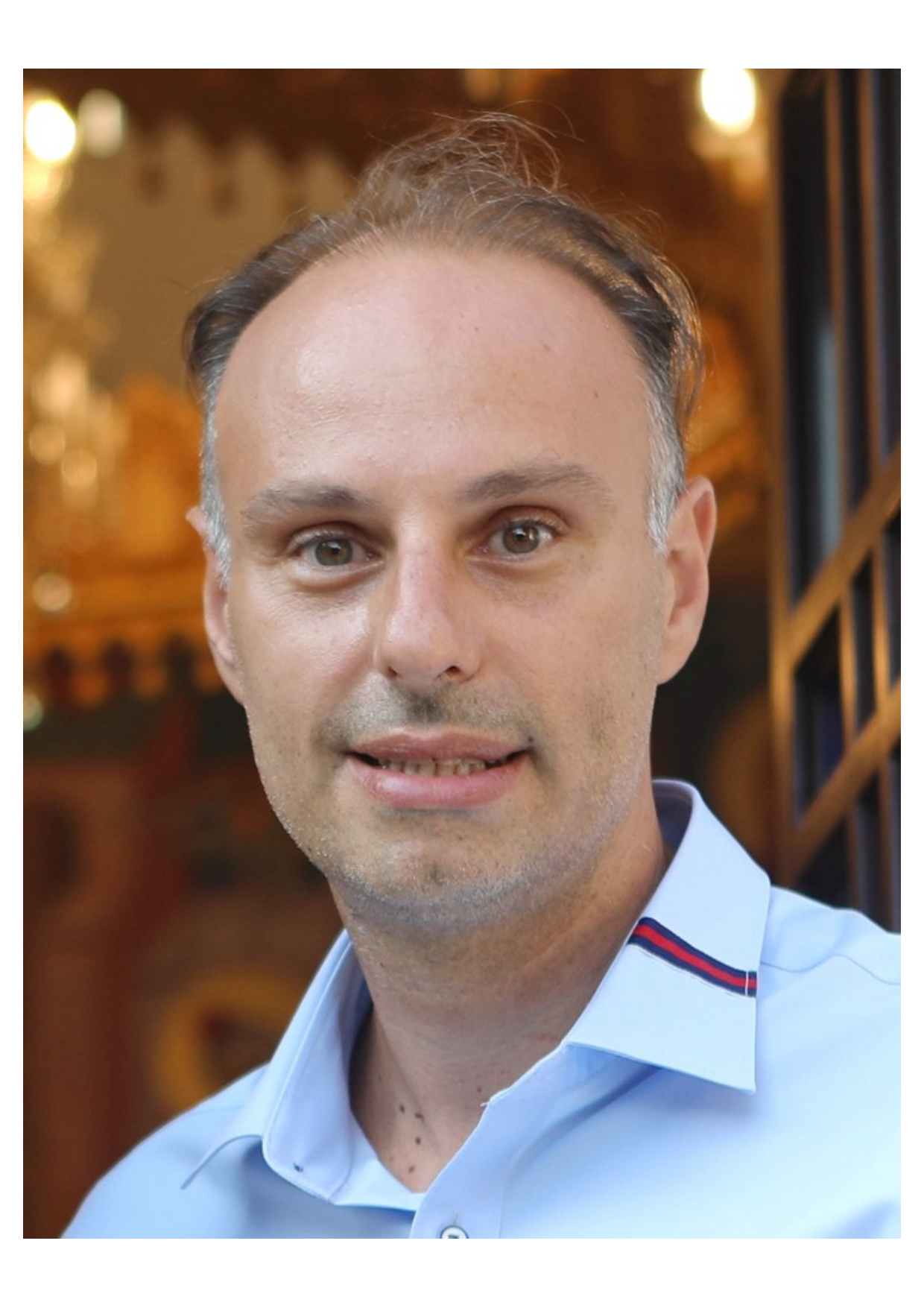}}]{Ioannis Krikidis} (F’19) received the diploma in Computer Engineering from the Computer Engineering and Informatics Department (CEID) of the University of Patras, Greece, in 2000, and the M.Sc and Ph.D degrees from École Nationale Supérieure des Télécommunications (ENST), Paris, France, in 2001 and 2005, respectively, all in Electrical Engineering. From 2006 to 2007 he worked, as a Post-Doctoral researcher, with ENST, Paris, France, and from 2007 to 2010 he was a Research Fellow in the School of Engineering and Electronics at the University of Edinburgh, Edinburgh, UK. He is currently a Professor at the Department of Electrical and Computer Engineering, University of Cyprus, Nicosia, Cyprus. His current research interests include wireless communications, quantum computing, 6G communication systems, wireless powered communications, and intelligent reflecting surfaces. He serves as an Area Editor for IEEE Transactions on Communications and as Editor-in-Chief of Frontiers in Communications and Networks. He was the recipient of the Young Researcher Award from the Research Promotion Foundation, Cyprus, in 2013, and the recipient of the IEEE ComSoc Best Young Professional Award in Academia, 2016, and IEEE Signal Processing Letters best paper award 2019. He has been recognized by the Web of Science as a Highly Cited Researcher for 2017-2021. He has received the prestigious ERC Consolidator Grant for his work on wireless powered communications.
\end{IEEEbiography}

\end{document}